\newtheorem{theorem}{Theorem}[section]
\newtheorem{corollary}[theorem]{Corollary}
\newtheorem{lemma}[theorem]{Lemma}
\newtheorem{claim}[theorem]{Claim}
\newtheorem{observation}[theorem]{Observation}
\newtheorem{notn}[theorem]{Notation}
\DeclareMathOperator{\DT(P)}{DT(P)}
\newcommand{\old}[1]{{}}
\title{Bounded Degree Planar Geometric Spanners\thanks{Department of
Computer Science, Ben-Gurion University, Beer-Sheva 84105, Israel. 
P. C. and L. C. research is partially supported by Lynn and William Frankel
Center for Computer Science. }}
\author{
Paz Carmi \and Lilach Chaitman}
\begin{document}
\maketitle

\begin{abstract}
Given a set $P$ of $n$ points in the plane, we show how to compute in $O(n \log n)$ time a subgraph of their Delaunay triangulation that has maximum degree 7 and is a strong planar $t$-spanner of $P$ with $t =(1+ \sqrt{2})^2 *\delta$, where $\delta$ is the spanning ratio of the Delaunay triangulation. 
Furthermore, given a Delaunay triangulation, we show a distributed algorithm that computes the same bounded degree planar spanner in $O(n)$ time.
\end{abstract}

\section{Introduction}
Given a weighted graph $G=(V,E)$ and a real number $t \geq 1$, a $t$-\emph{spanner} of $G$ is a spanning subgraph $G^*$ with the property that for every edge $\{p,q\} \in G$, there exists a path between $p$ and $q$ in $G^*$ whose weight is no more than $t$ times the weight of the edge $\{p,q\}$. Thus, shortest-path distances in $G^*$ approximate shortest-path distances in the underlying graph $G$ and the parameter $t$ represents the approximation ratio. The smallest $t$, for which $G^*$ is a $t$-spanner of $G$, is known as the spanning ratio of the graph $G^*$. 

Spanners have been studied in many different settings. The various settings depend on the type of underlying graph $G$, on the way weights are assigned to edges in $G$, on the specific value of the spanning ratio $t$, and on the function used to measure the weight of a shortest path. We concentrate on the setting where the underlying graph is geometric. In this context, a geometric graph is a weighted graph whose vertex set is a set of points in $\Re^d$ and whose edge set consists of line segments connecting pairs of vertices. The edges are weighted by the Euclidean distance between their endpoints. 

There is a vast body of literature on different methods for constructing $t$-spanners with various properties in this geometric setting (see \cite{GiriSmid07} for a comprehensive survey of the area). Aside from trying to build a spanner that has a small spanning ratio, additional properties of the spanners are desirable, e.g., planarity and bounded degree.

In this paper we consider the following problem.
Given a set of points in the plane, the goal is to compute a bounded degree planar spanner of this set of points. Obtaining such a property enables us to perform routing algorithms that are known for plane graphs, e.g., \cite{BoseMorin04C} and  \cite{BoseMorin04}. Bose et al. \cite{BoseGS02} were the first to show the existence of a plane $t$-spanner (for some constant $t$) whose maximum vertex degree is bounded by a constant. Subsequently, Li and Wang \cite{LiW04} reduced the degree bound to 23. In \cite{BoseSX09}, Bose et al. improved the degree bound to 17. Currently, the degree bound stands at 14, as shown by Kanj and Perkovic \cite{PerkovicK08}. Here, we show how to reduce the degree even further.

More precisely, given a set $P$ of $n$ points in the plane and the Delaunay triangulation $\DT(P)$ of $P$, we show how to compute in $O(n \log n)$ time a subgraph of $\DT(P)$ that has maximum degree 7 and is a strong $t$-spanner of $P$ with $t =(1+\sqrt{2})^2 *\delta$, where $\delta$ is the spanning ratio of the Delaunay triangulation. This is a significant improvement in the degree bound from 14 to 7.
Furthermore, given a Delaunay triangulation and a clockwise order of the edges, we show a distributed algorithm that computes a planar spanner of degree 7 in $O(n)$ time. Notice that we do not require the edges to be sorted by their length. 

Another result shown in this paper is Corollary~\ref{cor:pathBound} which states the stretch factor for a special case of the Delaunay triangulation. We hope that this result will help shed some light on the real stretch factor of Delaunay triangulation.

\begin{notn}
Given a set of points $P$ in $\Re^2$, let $\DT(P)$ denote the Delaunay triangulation of $P$.
\end{notn}
\begin{notn}
Let $S_p= \{ q_0,\dots , q_k  \}$ be the set of neighbors of $p$ in $\DT(P)$, labeled in clockwise order.
\end{notn}
\begin{notn}
Let $S_{p,q_i,q_j} = \{q_k \in S_p: \ q_k$ is after $q_i$  and before  $q_j$ in the clockwise order  $\}$, for angle  $\angle(q_i p q_j) < \pi$.
\end{notn}
For simplicity of presentation, in the rest of the paper we assume that in $S_{p,q_i,q_j}$  the index of $i$ 
is smaller than the index of $j$ in the clockwise order.
\begin{notn}
Let wedge $W_{p,q_i,q_j}= \{\{p,q_k\}: q_k \in S_{p,q_i,q_j}  \}$ (see,~Figure~\ref{fig:wedge}).
\end{notn}
\begin{notn}
Let $P_{S_{p,q_1,q_k}}$ 
denote the path in $\DT(P)$ from $q_1$ to $q_k$ restricted to points in $S_{p,q_1,q_k}$.
\end{notn}
\begin{notn}
Let $\delta_{S_{p,q_1,q_k}}$ 
denote the length of the path $P_{S_{p,q_1,q_k}}$.
\end{notn}
\begin{notn}
Let $\{p,q_{min}\}$ be the shortest edges in $\DT(P)$ that is incident to $p$ ($q_{min} =  min_{q \in S_p} |\{p,q \}|$).
\end{notn}
\begin{notn}
For a graph $G = (V,E)$ and two points $p,q \in V$, let $\delta_G(p,q)$ denote the length of the shortest path between $p$ and $q$ in $G$.
\end{notn} 
\begin{notn}
For each $p \in P$, let $C_p=\{C_{p_1},C_{p_2}, \dots ,C_{p_8}  \}$ denote a set of 8 closed cones labeled in clockwise order, with apex $p$ and $\frac{\pi}{4}$ angle, such that the boundary of cones  $C_{p_1}$ and $C_{p_8}$ contains the edge $\{p,q_{min}\}$.
\end{notn}  
\begin{notn}
Let $D_{p,a,z}$ denote the disk having $p$, $a$ and $z$ on its boundary.
\end{notn}

\begin{figure}[htp]
    \centering
        \includegraphics[width=0.27\textwidth]{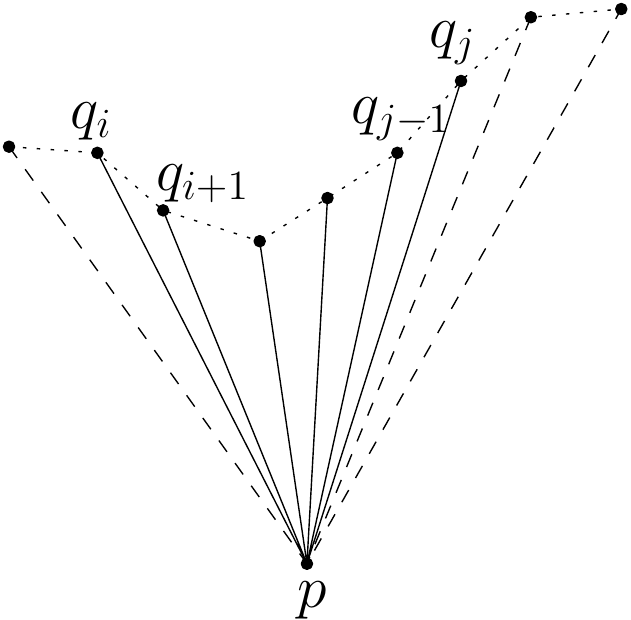}
    \caption{Wedge $W_{p,q_i,q_j}$ and its edges depicted in solid lines} 
    \label{fig:wedge}
\end{figure}

The rest of this paper is organized as follows.
In Section~\ref{sec:alg1} we describe an $O(n \log n)$ algorithm (Algorithm~\ref{alg:AlgMain}) that bounds the degree of each point $p \in P$ by $7$. In Section~\ref{sec:bou1} we show that this algorithm gives a sub-graph of the Delaunay triangulation with bounded degree $7$. Then, in Section~\ref{sec:SR1} we prove that the stretch factor of the sub-graph resulting from the algorithm is a small constant. Finally, in Section~\ref{sec:runTime} we show that given a Delaunay triangulation one can compute a distributed planar spanner of degree 7 in linear running time.  

\section{Algorithm for bounded degree 7} \label{sec:alg1}
In this section we describe an algorithm that computes a bounded degree planar spanner. The approach we take to build such a spanner is to start with the Delaunay triangulation and then prune its edges to achieve the degree bound of 7 while maintaining a constant spanning ratio. To achieve this we ensure that for every edge of the Delaunay triangulation where we do not add to our resulting spanner, there is a spanning path in the resulting subgraph that approximates this edge.

The algorithm consists of two main subroutines, BoundSpanner() and Wedge(). 
We start with the BoundSpanner() algorithm.
The first subroutine (Algorithm~\ref{alg:AlgMain}) is the main component of the algorithm. 
In the beginning of this subroutine, we compute the Delaunay triangulation ($\DT(P)$) of the set of points and sort them in nondecreasing length order.
We find for each point $p$ its nearest point $q_{min}$ in $P$; since $\{p,q_{min} \}$ is an edge of $\DT(P)$, we use the $\DT(P)$ edges to find this edge for each point. 
Then, for each point $p$ the orientation of the closed cones of $C_p$ is defined according to this shortest
edge $\{p,q_{min}\}$, such that this edge is shared by cones $C_{p_1}$ and $C_{p_8}$.  
Next, an edge $\{p,q\}$ is added to $G$ if both $p$ and $q$ \emph{agree} on it. A point $p$ \emph{agrees} on an edge $\{p,q\}$ if every cone $C_{p_i} \in C_p$ containing $\{p,q\}$ is empty (i.e., $C_{p_i} \cap E = \emptyset$ for every cone $C_{p_i} \in C_p $ containing the edge $\{p,q\}$).
Notice that an edge could be contained in at most two closed cones, and exactly, two cones if this edge is on the common boundary of two cones. After adding an edge $\{p,q\}$, we call the second subroutine (\emph{Wedge()}) twice, once for point $p$ and then for point $q$.   
In the second subroutine (Algorithm~\ref{alg:wedge}) we add more edges to $E$ that do not affect the degree of the spanner;  however, they help us to bound the stretch factor of the resulting graph.

\begin{algorithm}[htp]
\caption{BoundSpanner($P$)}\label{alg:AlgMain}
\begin{algorithmic}[1]
\REQUIRE A set $P$ of points in the plane
\ENSURE A planar $t$-spanner $G = (P,E)$ with maximum degree 7
\vspace{0.2cm}
\STATE Compute $\DT(P)\rightarrow (P,E_{DT})$
\STATE Let $L$ be a sorted list of the edges of $DT$ by nondecreasing length
\STATE $E \leftarrow \emptyset$ 
\STATE $E^* \leftarrow \emptyset$ 
\STATE  Initialize $C_p$ for each $p \in P$ \hspace{3.5cm} /* with respect to edge $\{p,q_{min}\}$ */  

\vspace{0.1cm}
\FOR {each edge $\{p,q \}\in L$    (* in the sorted order *)} \label{loop1}   
	   \IF {($\forall C_{p_i}$ contains $\{p,q  \}$, $C_{p_i}\cap E = \emptyset$)  
	                and ($\forall C_{q_j}$ contains $\{p,q  \}$, $C_{q_j}\cap E = \emptyset$)\\
	     \qquad  /* Note: every edge can be contained in at most two adjacent closed cones */\\
	   }
				\STATE $E \leftarrow E\cup \{ \{p,q  \} \}$
		    \STATE Wedge($p,q$)        \hspace{3.1cm}  /*   calling  subroutine Wedge() to check if some 
        \STATE Wedge($q,p$)     \hspace{3.7cm}  edges ($E^*$) are needed to be added to $E$ \ \ */
    \ENDIF
\ENDFOR
\STATE $E \leftarrow E\cup E^*$ 

\end{algorithmic}
\end{algorithm}

\begin{algorithm}[htp]
\caption{Wedge($p,q_i$)}\label{alg:wedge}
\begin{algorithmic}[1]
\REQUIRE Two points $p$ and $q_i$ such that the edge $\{p,q_i \} \in \DT(P)$.
\ENSURE A set of edges $E^*$ to be added to the spanner $G = (P,E)$ 
\vspace{0.2cm}
\FOR {every $C_{p_z}$ contains $\{p,q_i \}$}
	\STATE Let $\{p,q_j \}$ and $\{p,q_k \}$ be the first and the last edges in cone $C_{p_z}$ (clockwise).
  \STATE $E^* \leftarrow E^* \cup \{ \{ q_m,q_{m+1} \} \}$ for each $j< m < i-1$\label{step:wedAdd1}
	\STATE $E^* \leftarrow E^* \cup \{ \{ q_m,q_{m+1} \} \}$ for each $i< m < k-1$\label{step:wedAdd2}
	
	\IF {(edge $\{p,q_{i+1} \} \in C_{p_z}$) and ($q_{i+1} \neq q_k$) and   (angle $\angle(pq_iq_{i+1}) > \pi/2$) }
          \STATE $E^* \leftarrow E^* \cup \{ \{ q_i,q_{i+1} \} \}$\label{step:wedAdd3}
	\ENDIF 
	\IF {(edge $\{p,q_{i-1} \} \in C_{p_z}$) and ($q_{i-1} \neq q_j$) and  (angle $\angle(pq_iq_{i-1}) > \pi/2$) }
           \STATE $E^* \leftarrow E^* \cup \{ \{ q_i,q_{i-1} \} \}$\label{step:wedAdd4}
	\ENDIF 
\ENDFOR

\end{algorithmic}
\end{algorithm}

\textbf{Remark:} Note that the output $t$-spanner $G$ of $P$ obtained by Algorithm~\ref{alg:AlgMain} is a subgraph of $\DT(P)$, therefore, it is planar and has a linear number of edges. 

\section{Bounded degree}\label{sec:bou1}
In this section we show that the degree of a point in the resulting graph is at most 7. We start by making two basic observations, then conclude with Lemma~\ref{lemma:lemma0}.
 
\begin{observation}\label{obs:obs1}
An edge $\{p,q\}\in C_{p_i}$ is added to $E$ during the loop in Algorithm~\ref{alg:AlgMain} step~\ref{loop1} only if $C_{p_i} \cap E = \emptyset$.
\end{observation}

\begin{observation}\label{obs:commonEdge}
The first edge incident to a point $p$ added to $E$ during Algorithm~\ref{alg:AlgMain} is $\{p,q_{min}\}$, thus, 
it belongs to two closed cones  $C_{p_1}$ and $C_{p_8}$ in $C_p$.
\end{observation}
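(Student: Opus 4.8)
The plan is to prove the two assertions of the observation separately: first that the edge $\{p,q_{min}\}$ is in fact added to $E$ before any other edge incident to $p$, and then the ``thus'' clause, which is immediate. The latter follows straight from the cone definition: since the orientation of $C_p$ is chosen so that the boundary separating $C_{p_1}$ and $C_{p_8}$ carries $\{p,q_{min}\}$, this edge lies on the common boundary of the two adjacent closed cones and therefore belongs to both. Consequently the entire content of the observation reduces to showing that $\{p,q_{min}\}$ is the first edge incident to $p$ to enter $E$.

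First I would exploit the processing order. Because $L$ is sorted by nondecreasing length and $\{p,q_{min}\}$ is, by definition, the shortest edge incident to $p$, it is the first edge incident to $p$ examined by the loop in step~\ref{loop1}. At the moment it is examined no edge incident to $p$ has yet been placed in $E$, so in particular the only two cones of $p$ that contain it, namely $C_{p_1}$ and $C_{p_8}$, are empty; hence, by Observation~\ref{obs:obs1}, $p$ agrees on $\{p,q_{min}\}$. It then remains only to verify that $q_{min}$ also agrees, for the edge is inserted precisely when both endpoints agree.

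The main obstacle is exactly this second agreement, since the nearest-neighbor relation is not symmetric and $q_{min}$ may already have shorter incident edges in $E$. I would argue by contradiction. Suppose $q_{min}$ does not agree, so some cone $C_{q_{min},a}$ containing $\{q_{min},p\}$ also contains an edge $\{q_{min},r\}\in E$. Being already in $E$, this edge was processed earlier, whence $|q_{min}r|\le|q_{min}p|$; and since both edges lie in a single cone of opening $\pi/4$, the angle $\theta=\angle(r\,q_{min}\,p)$ satisfies $\theta\le\pi/4$. Normalizing $|q_{min}p|=1$ and writing $s=|q_{min}r|\le 1$, the law of cosines gives $|pr|^2-1=s(s-2\cos\theta)$, and because $\theta\le\pi/4$ forces $2\cos\theta\ge\sqrt2>1\ge s>0$, the right-hand side is strictly negative. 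Thus $|pr|<|p q_{min}|$ with $r\in P\setminus\{p\}$, contradicting that $q_{min}$ is the nearest point of $P$ to $p$. Hence no such blocking edge exists, $q_{min}$ agrees, and $\{p,q_{min}\}$ is added.

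Putting the pieces together, $\{p,q_{min}\}$ is examined before every other edge incident to $p$ and is actually inserted, so it is the first edge incident to $p$ placed in $E$; and by the cone orientation it occupies $C_{p_1}$ and $C_{p_8}$. The only delicate point to watch is the treatment of equal-length edges, but the computation above used only $s\le1$ and $\theta\le\pi/4$, so it already covers the case $|q_{min}r|=|q_{min}p|$ and no separate tie-breaking argument is needed.
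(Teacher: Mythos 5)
Your proof is correct and takes essentially the same approach as the paper: $p$ agrees because $\{p,q_{min}\}$ is the shortest edge incident to $p$, and $q_{min}$ agrees because any already-inserted edge $\{q_{min},r\}$ in the cone containing $\{q_{min},p\}$ would force $r$ strictly inside the empty disk centered at $p$ with radius $|pq_{min}|$. Your law-of-cosines computation simply makes explicit the geometric step that the paper compresses into the single sentence ``since the disk centered at $p$ with radius $|pq_{min}|$ is empty of points, the cone $C_{q_j}$ containing $\{p,q_{min}\}$ is empty.''
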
 
\begin{proof}
Consider the time when edge $\{p,q_{min}\}$ is considered to be added to $E$. Then, since it is the shortest edge incident to  $p$ in $\DT(P)$ it implies that all the cones in $C_p$ are empty; thus, $p$ ``agrees'' on adding edge $\{p,q_{min}\}$ to $E$. Let $C_{q_j}$ be the cone in $C_q$ that contains the edge $\{p,q_{min}\}$. Since the disk centered at $p$ with radius  $|p,q_{min}|$ is empty of points, it implies that cone $C_{q_j}$ that contains $\{p,q_{min}\}$ is empty. Thus, $q$ ``agrees'' on adding edge $\{p,q_{min}\}$ as well, and the edge is added to $E$.
\end{proof}

\begin{lemma}\label{lemma:lemma0}
The degree of spanner $G$ constructed by the above algorithm is bounded by 7.
\end{lemma}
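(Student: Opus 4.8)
The plan is to show that each point $p$ has at most 7 incident edges in $G$, and the natural strategy is to count edges by the 8 cones $C_{p_1},\dots,C_{p_8}$ of $C_p$. The key structural fact to exploit is Observation~\ref{obs:obs1}: an edge $\{p,q\}\in C_{p_i}$ is added only when $C_{p_i}\cap E=\emptyset$ at that moment, so \emph{once an edge is placed into a cone, no further edge whose cone-membership includes that cone can ever be added}. Since each cone spans exactly $\pi/4$ and the algorithm treats a $\frac{\pi}{4}$-cone as ``occupied'' as soon as one edge lands in it, each cone can receive at most one edge. Naively this gives a bound of 8 edges per point (one per cone), so the whole difficulty is shaving this down to 7, and that is where Observation~\ref{obs:commonEdge} enters.

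First I would argue the per-cone cap carefully. An edge $\{p,q\}$ is counted in cone $C_{p_i}$ if $C_{p_i}$ contains it; by Observation~\ref{obs:obs1} the edge is added only if $C_{p_i}$ was empty, and after the addition $C_{p_i}$ is nonempty, so no later edge contained in $C_{p_i}$ is ever added. The subtle point is that an edge may be contained in \emph{two} adjacent closed cones (when it lies on their common boundary), so a single edge can simultaneously occupy two cones; this only helps, since it blocks two cones with one edge. Thus the number of edges incident to $p$ is at most the number of cones, i.e.\ at most 8, and moreover every added edge occupies at least one cone that no other added edge occupies (double-counting an edge across two cones does not inflate the count).

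The crux — which I expect to be the main obstacle — is to eliminate the eighth edge. Here I would invoke Observation~\ref{obs:commonEdge}: the first edge ever added at $p$ is the shortest incident edge $\{p,q_{min}\}$, and by construction the cones $C_p$ are oriented so that $\{p,q_{min}\}$ lies on the common boundary of $C_{p_1}$ and $C_{p_8}$. Therefore this single edge occupies \emph{two} cones at once, namely $C_{p_1}$ and $C_{p_8}$. By the per-cone argument above, once $\{p,q_{min}\}$ is in, both $C_{p_1}$ and $C_{p_8}$ are permanently blocked, so no further edge contained in $C_{p_1}$ or in $C_{p_8}$ can be added. Hence among the 8 cones, the pair $\{C_{p_1},C_{p_8}\}$ together contribute exactly one edge rather than two, while each of the remaining six cones $C_{p_2},\dots,C_{p_7}$ contributes at most one edge. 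This yields a total of at most $1+6=7$ edges incident to $p$.

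To finish, I would note that the edges added by the \emph{Wedge()} subroutine do not touch $p$: every edge placed in $E^*$ by \emph{Wedge($p,q_i$)} is of the form $\{q_m,q_{m+1}\}$ or $\{q_i,q_{i\pm1}\}$, joining two \emph{neighbors} of $p$ rather than $p$ itself, so these additions leave the degree of $p$ unchanged and are irrelevant to the count. Combining the per-cone cap, the two-cone occupancy of $\{p,q_{min}\}$, and the irrelevance of the \emph{Wedge()} edges to $p$'s degree, the degree of $p$ in $G$ is at most 7, as claimed. The one place to be careful is the boundary-sharing convention for edges lying exactly on a cone boundary; I would handle this by the stated convention that each such edge is charged to both closed cones it touches, which is precisely what makes $\{p,q_{min}\}$ save a cone and drive the bound from 8 down to 7.
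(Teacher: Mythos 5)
There is a genuine gap, and it is exactly the half of the argument that the paper spends most of its proof on. Your count of the edges added to $E$ during the main loop of Algorithm~\ref{alg:AlgMain} is correct and matches the paper: by Observation~\ref{obs:obs1} each cone of $C_p$ can host at most one such edge, and by Observation~\ref{obs:commonEdge} the first edge $\{p,q_{min}\}$ lies on the common boundary of $C_{p_1}$ and $C_{p_8}$ and so blocks two cones at once, giving at most $8-1=7$ such edges. The problem is your final step, where you dismiss the \emph{Wedge()} edges as ``irrelevant to the count.'' It is true that a call \emph{Wedge($p,q_i$)} adds no edge incident to $p$ itself; but the degree bound must hold for \emph{every} point, and the edges $\{q_m,q_{m+1}\}$ added by that call \emph{are} incident to the points $q_m$ and $q_{m+1}$. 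Equivalently: from the viewpoint of an arbitrary point $x$, its degree includes edges added by calls \emph{Wedge($z,r$)} for \emph{other} apexes $z$ having $x$ as a Delaunay neighbor. These edges are placed into $E^*$ without any cone-emptiness test at their endpoints, so your per-cone argument says nothing about them, and a priori they could push the degree of $x$ above 7.

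The paper closes this hole with a charging argument that your proposal is missing entirely. For an edge $\{p,q\}$ added by \emph{Wedge($z,r$)} in steps~\ref{step:wedAdd1} or~\ref{step:wedAdd2}, the three relevant edges $\{z,p\},\{z,q\},\{z,s\}$ lie in one $\frac{\pi}{4}$-cone of $C_z$, so $\angle(szq)\leq \pi/4$, and the empty-circle property of the Delaunay triangulation forces $\angle(qps)\geq 3\pi/4$; hence at least two cones of $C_p$ between $\{p,q\}$ and $\{p,s\}$ are empty, and one of them can be charged (uniquely) for $\{p,q\}$. For an edge added in steps~\ref{step:wedAdd3} or~\ref{step:wedAdd4}, the explicit angle condition $\angle(qpz)\geq \pi/2$ guarantees an empty cone of $C_p$ to charge. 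Since every edge incident to $p$ — whether from the main loop or from \emph{Wedge()} — thereby occupies or is charged to a distinct cone of $C_p$, and $\{p,q_{min}\}$ consumes two cones, the bound of 7 survives. Without some version of this geometric charging (or another mechanism controlling the $E^*$ edges at their endpoints), your proof establishes only that the main-loop edges at each point number at most 7, which is not the statement of the lemma.
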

\begin{proof}
Eight closed cones $C_{p}$ are defined for each point $p \in P$ during Algorithm~\ref{alg:AlgMain}.
By Observation~\ref{obs:commonEdge}, there are two cones $C_{p_1}$ and $C_{p_8}$ in $C_p$ sharing a common edge. 
Consider the edges $E^1_p$ incident to $p$ that are added to $E$ during Algorithm~\ref{alg:AlgMain} (not including the 
edges added during Algorithm~\ref{alg:wedge}). Then each edge $e \in E^1_p$ is added to $E$ only if the cone in $C_p$
containing $e$ is empty. Moreover, the first edge in $E^1_p$ added to $E$ shares two cones, thus $|E^1_p| \leq 7$.
Next, we show that the edges added during Algorithm~\ref{alg:wedge} can be charged uniquely to empty cones, thus
not increasing the degree bound of 7.
Let $\{ p, q\}$ be an edge added to $E$ during Algorithm~\ref{alg:wedge}; thus, there exists a point $z$
such that the edge $\{ p, q\}$ has been added to $E$ during the call \emph{Wedge(z,r)}. 
Moreover, this edge has been added in steps~\ref{step:wedAdd1}~,~\ref{step:wedAdd2} or during steps~\ref{step:wedAdd3}~,~\ref{step:wedAdd4}.
\begin{itemize}

	\item \textbf{Case 1:} The edge has been added during step~\ref{step:wedAdd1} or \ref{step:wedAdd2}. \\
        Let $\{p,s\}$ be the consecutive edge to $\{ p, z\}$, such that $q \neq s$. Since the edge $\{ p, q\}$
        has been added to $E$ during the call \emph{Wedge(z,r)}	it follows that the edge $\{z,s\}$ is in the same 
        cone (of $C_z$) as edges $\{z,p\}$ and $\{z,q\}$. Thus, the angle $\angle(szq) \leq \pi/4$
        and by the empty cycle property of Delaunay triangulation angle $\angle(qps) \geq 3\pi/4$.
        Therefore, there are at least two empty cones of $C_p$ located between $\{p,q\}$ and $\{p,s\}$.
        One of them is charged for the edge $\{p,q\}$ and the second is left to be charged for the edge $\{p,s\}$ 
        if needed.         
  \item \textbf{Case 2:} The edge has been added during step~\ref{step:wedAdd3} or \ref{step:wedAdd4}. \\
           In this case $r = p$. We know that the angle  $\angle(qpz) \geq \pi/2$, thus there is at least one empty cone $c'$ of $C_p$ located between $\{p,q\}$ and $\{p,z\}$. Therefore, this empty cone $c'$ is charged for the edge $\{p,q\}$.

 \end{itemize}

Therefore, the degree of every point $p \in P$ is bounded by 7.
\end{proof}

\section{Spanning ratio}\label{sec:SR1}
In this section we show that the spanning ratio of the resulting sub-graph is bounded. The empty circle property of Delaunay triangulations  allows us to make two basic but crucial observations.
\begin{observation}\label{obs:in-D}
From the empty cycle property of Delaunay triangulation it follows that each $x \in S_{s,r,p}$ is inside $D_{s,p,r}$.
\end{observation}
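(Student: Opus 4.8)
The plan is to prove the statement by reducing it to a purely local Delaunay condition and then propagating it along the fan of triangles at $s$. First I would name the relevant neighbors explicitly: by definition $S_{s,r,p}$ consists of the Delaunay neighbors of $s$ lying strictly between $r$ and $p$ in clockwise order inside the wedge $\angle(rsp)<\pi$, so I write them as $r=v_0, v_1, \dots, v_m=p$, giving $S_{s,r,p}=\{v_1,\dots,v_{m-1}\}$. Since $v_0,\dots,v_m$ are consecutive neighbors of $s$ spanning an angle smaller than $\pi$, each consecutive triple spans a Delaunay triangle $s v_i v_{i+1}$, and by the empty circle property its circumcircle $D_{s,v_i,v_{i+1}}$ contains no point of $P$ in its interior. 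The goal then becomes to show that $v_j$ is inside $D_{s,v_0,v_m}=D_{s,p,r}$ for every $0<j<m$.

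The base ingredient is a single flip observation. For three consecutive neighbors $v_{i-1}, v_i, v_{i+1}$, the triangles $s v_{i-1} v_i$ and $s v_i v_{i+1}$ are both Delaunay and share the edge $\{s,v_i\}$; they form the quadrilateral $s v_{i-1} v_i v_{i+1}$ whose Delaunay diagonal is $\{s,v_i\}$. The in-circle (flip) characterization of Delaunay triangulations then forces the opposite vertex $v_i$ to lie inside the circumcircle of the other diagonal's triangle, namely $D_{s,v_{i-1},v_{i+1}}$. If $v_i$ happens to be reflex in this quadrilateral it already lies inside the triangle $s v_{i-1} v_{i+1}$ itself, hence again inside $D_{s,v_{i-1},v_{i+1}}$, so the conclusion holds in all cases. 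This settles the claim when $m=2$.

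To globalize from $m=2$ to arbitrary $m$ I would argue by induction, using the inscribed angle theorem as the translation device. Writing $H$ for the open half-plane bounded by the line through $s$ and $v_0=r$ that contains $v_1,\dots,v_m$ (well-defined because the wedge angle is below $\pi$), all circles through the fixed chord $\{s,v_0\}$ meet only at $s$ and $v_0$, so on $H$ they are totally ordered by inclusion. By the converse of the inscribed angle theorem, a point $w\in H$ lies inside $D_{s,v_0,v_m}$ precisely when it subtends a larger angle at the chord than $v_m$ does, i.e. $\angle(s w v_0) > \angle(s v_m v_0)$. Thus the whole statement is equivalent to the monotonicity $\angle(s v_1 v_0) > \dots > \angle(s v_m v_0)$ along the fan, which I would derive by feeding the local flip inequality of the previous paragraph into this angular reformulation and chaining the resulting inequalities across consecutive triples.

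I expect the main obstacle to be exactly this gluing step: the flip observation controls each $v_i$ only relative to its immediate neighbors $v_{i-1},v_{i+1}$, giving a different chord at every index, whereas the target compares every $v_j$ against the single fixed chord $s v_0$ and far endpoint $v_m$. Turning the local, chord-shifting inequalities into the global monotonicity requires the nesting of circles through $\{s,v_0\}$ on the side $H$, together with a careful verification that enlarging the neighbor set enlarges the disk on that side, so that points already enclosed stay enclosed and the newly appended neighbor is captured. The degenerate cases (a reflex vertex in a quadrilateral, cocircular points sitting on a boundary, and the possibility that some $v_j$ lies on the far side of the chord $s v_0$) are where the bookkeeping will concentrate.
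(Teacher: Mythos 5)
You should know at the outset that the paper offers no proof of this observation at all: it is simply asserted as following from the empty-circle property, so the real question is whether your attempt stands on its own as a complete argument. Much of it does. The setup (labeling the fan $r=v_0,\dots,v_m=p$, noting each $s v_i v_{i+1}$ is a Delaunay triangle with empty circumdisk), the $m=2$ case via the flip characterization (including the reflex case — though you should also note that $v_{i-1}$ or $v_{i+1}$ cannot be the reflex vertex, since a reflex vertex there would lie inside a Delaunay triangle, which is empty), the nesting of all circles through the fixed chord $\{s,v_0\}$ on the half-plane $H$, and the inscribed-angle criterion ``$w\in H$ is inside $D_{s,v_0,v_m}$ iff $\angle(s w v_0) > \angle(s v_m v_0)$'' are all correct ingredients.

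The genuine gap is exactly the step you flag as the ``main obstacle'': it is announced but never executed, and the tool you propose for it — chaining the flip inequalities $v_i \in D_{s,v_{i-1},v_{i+1}}$ across consecutive triples — does not transfer to the fixed chord $\{s,v_0\}$. Each flip inequality is an angle comparison over the moving chord $\{s,v_{i-1}\}$, and turning a family of such locally-convex conditions into the global monotonicity $\angle(s v_1 v_0) \geq \dots \geq \angle(s v_m v_0)$ is essentially the statement that a locally convex chain is globally convex; that is a real argument, not bookkeeping, and your sketch gives no mechanism for it. What actually closes the induction is a different use of emptiness: to prove the single step $\angle(s v_{j+1} v_0) \leq \angle(s v_j v_0)$, equivalently $v_{j+1} \notin \mathrm{int}\, D_{s,v_0,v_j}$, use that the Delaunay disk $E_j = D_{s,v_j,v_{j+1}}$ is empty, so in particular $v_0 \notin \mathrm{int}\, E_j$. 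The disks $E_j$ and $D_{s,v_0,v_j}$ share the chord $\{s,v_j\}$, and $v_0$ and $v_{j+1}$ lie on opposite sides of the line through $s,v_j$ (this is where the wedge angle being less than $\pi$ enters). Since $v_0$ lies on $\partial D_{s,v_0,v_j}$ but not inside $E_j$, on $v_0$'s side the portion of $E_j$ is nested inside that of $D_{s,v_0,v_j}$; for two disks sharing a chord the nesting reverses on the other side, so on $v_{j+1}$'s side $D_{s,v_0,v_j}$ is nested inside $E_j$, and since $v_{j+1}\in\partial E_j$ it cannot be interior to $D_{s,v_0,v_j}$. Iterating this single step gives the monotone angle chain, and then your inscribed-angle reformulation finishes the proof (with ``inside'' weakening to ``inside or on the boundary'' in cocircular degeneracies). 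So: right skeleton and the right named tools, but the inductive step must play the empty-circle property against $v_0$ rather than against the flip inequalities, and as written that step is missing.
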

\begin{observation}\label{obs:pi-x}
For $q_j,q_i,q_k \in S_{p,q_j,q_k}$ such that $q_i$ is between  $q_j$ and $q_k$ in the clockwise order, the angle $\angle{q_j q_i q_k} \geq \pi-\angle{q_j p q_k}$.
\end{observation}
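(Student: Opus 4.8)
The plan is to reduce the statement to the inscribed-angle theorem applied to the circle through $p$, $q_j$, and $q_k$. First I would invoke Observation~\ref{obs:in-D} with $s = p$, $r = q_j$, and third defining point $q_k$: since $q_i \in S_{p,q_j,q_k}$ is a neighbor of $p$ lying clockwise after $q_j$ and before $q_k$, it lies strictly inside the disk $D_{p,q_j,q_k}$ bounded by the circle through $p$, $q_j$, and $q_k$. This places all three of $q_i$, $q_j$, $q_k$ in a single, explicitly understood disk.

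Second, I would record the relevant inscribed-angle facts for the chord $q_j q_k$ of this circle. The chord splits the boundary circle into two arcs: every point of the arc on the same side of the line $q_j q_k$ as $p$ sees the chord under the angle $\angle q_j p q_k$, while every point of the opposite arc sees it under the supplementary angle $\pi - \angle q_j p q_k$. Moreover, for a fixed chord the locus $\{X : \angle q_j X q_k = \theta\}$ is a circular arc, so moving from the arc toward the chord (i.e., into the disk on that same side) strictly increases the angle under which the chord is seen. Consequently, if $q_i$ lies in the circular segment cut off by $q_j q_k$ on the side opposite $p$, then $\angle q_j q_i q_k \geq \pi - \angle q_j p q_k$, which is exactly the claim.

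Third --- and this is the step I expect to be the main obstacle --- I would have to show that $q_i$ really lies in that far segment, i.e., on the opposite side of the line $q_j q_k$ from $p$, rather than inside the triangle $p q_j q_k$. Angularly $q_i$ sits in the wedge $\angle q_j p q_k < \pi$, so the ray from $p$ through $q_i$ crosses the segment $q_j q_k$; the real content is to rule out $q_i$ being trapped between $p$ and that crossing point, since the pure ``inside the disk'' conclusion of Observation~\ref{obs:in-D} is by itself compatible with $q_i$ on either side of the chord (on the near side one only gets the weaker $\angle q_j q_i q_k \geq \angle q_j p q_k$, which does not imply the stated bound). To secure the far-segment placement I would lean on the Delaunay/empty-circle structure binding $q_i$, $q_j$, and $q_k$ as simultaneous neighbors of $p$ --- arguing that the portion of $p$'s neighborhood boundary running from $q_j$ through $q_i$ to $q_k$ cannot fold back across the chord onto $p$'s side. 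Once this placement is established, the inscribed-angle estimate of the previous paragraph closes the proof immediately.
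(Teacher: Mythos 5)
Your first two steps track the paper's own argument: place $q_i$ inside $D_{p,q_j,q_k}$ via Observation~\ref{obs:in-D}, then apply inscribed-angle monotonicity for the chord $q_jq_k$. But your third step --- showing that $q_i$ lies in the segment of the disk on the far side of the chord from $p$ --- is the entire content of the proof, and you do not prove it; you only say you ``would lean on the Delaunay/empty-circle structure.'' That is a genuine gap, and worse, it cannot be closed, because the claim you defer is false. Take $p=(0,0)$, $q_j=(-1,2)$, $q_i=(0,1)$, $q_k=(1,2)$. Since $q_i$ is interior to the triangle $\triangle(p\,q_j\,q_k)$, the unique (hence Delaunay) triangulation of these four points joins $q_i$ to the other three, so $q_i$ is a neighbor of $p$ lying angularly between $q_j$ and $q_k$, and it lies inside $D_{p,q_j,q_k}$ (center $(0,5/4)$, radius $5/4$) yet on $p$'s side of the chord. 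Here $\angle(q_j q_i q_k)=\pi/2$ while $\pi-\angle(q_j p q_k)=\pi-\arccos(3/5)\approx 0.705\pi$, so the stated inequality fails. No argument that $p$'s neighborhood path ``cannot fold back across the chord'' can succeed, since in this example it does exactly that.

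What your analysis gets right --- and this is worth saying --- is the location of the difficulty: the paper's own proof glosses over precisely the point you isolated. It asserts that $\angle(q_j q_i q_k)$ is minimized when $q_i$ is on the boundary of $D_{p,q_j,q_k}$, ``in which case it equals $\pi-\angle(q_j p q_k)$''; that identity holds only on the arc opposite $p$, while on the near arc the inscribed angle is $\angle(q_j p q_k)$, giving only the weaker bound you noted. So the observation as literally stated is missing a hypothesis. The paper implicitly repairs this wherever the observation is invoked: in Lemma~\ref{lemma:lemma5} it is applied only after checking a side condition (the relevant point is outside the triangle, or ``the angle facing towards $s$ is less than $\pi$''), which is exactly the far-side placement that makes the inscribed-angle estimate valid. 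A correct write-up must therefore take that condition as an additional hypothesis (to be verified at each point of use), rather than attempt to derive it from $q_i,q_j,q_k$ being neighbors of $p$ in clockwise order --- it is not a consequence of that assumption.
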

\begin{proof}
Due to the empty cycle property of Delaunay triangulation, the point $q_i$ lies inside the disk $D_{p,q_j,q_k}$ having $p,q_j,q_k$ on its boundary (Observation~\ref{obs:in-D}). The angle $\angle{q_j q_i q_k}$ is minimized when $q_i$ is on the boundary of $D_{p,q_j,q_k}$. In that case $\angle{q_j q_i q_k} = \pi-\angle{q_j p q_k}$ since the two angles lie on the same chord $(q_j,q_k)$. Therefore, $\angle{q_j q_i q_k} \geq \pi-\angle{q_j p q_k}$.
\end{proof}

\begin{observation} \label{obs:x/sinx}
Let $D_{p,a,z}$ be a disk having $p$, $a$, and $z$ on its boundary and let $\beta$ denote the angle $\angle(pza)$. Then, $\frac{\beta}{\sin(\beta)}|\{p,a\}|$ is the length of the arc from $p$ to $a$ on the boundary of $D_{p,a,z}$ ($\wideparen{pa}$ ).
\end{observation}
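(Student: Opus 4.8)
The plan is to reduce the claim to two classical facts about a circle, the inscribed angle theorem and the chord-length formula, and then eliminate the radius. Let $R$ denote the radius of $D_{p,a,z}$. Since $\beta = \angle(pza)$ is an inscribed angle with vertex $z$ subtending the chord $\{p,a\}$, the inscribed angle theorem tells us that the central angle subtending the same chord, along the arc $\wideparen{pa}$ that does not contain $z$, equals $2\beta$.

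First I would observe that the arc $\wideparen{pa}$ appearing in the statement is precisely this arc, so its length is the radius times its central angle, namely $2R\beta$. It then remains only to express $R$ in terms of the chord length $|\{p,a\}|$ and $\beta$. For this I would invoke the chord-length formula: the chord subtending a central angle $2\beta$ has length $2R\sin\beta$ (equivalently, this is the law of sines applied to the triangle $\triangle pza$, where the chord $\{p,a\}$ is opposite the inscribed angle $\beta$ and the circumradius is $R$). Hence $|\{p,a\}| = 2R\sin\beta$, giving $R = \frac{|\{p,a\}|}{2\sin\beta}$. Substituting this into $2R\beta$ yields arc length $=\frac{\beta}{\sin\beta}|\{p,a\}|$, exactly as claimed.

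The only point that requires care, and the closest thing to an obstacle, is correctly identifying the relevant arc and confirming the formula over the full range of $\beta$: the inscribed angle at $z$ subtends the arc not containing $z$, whose central angle is $2\beta$, and the identity $|\{p,a\}| = 2R\sin\beta$ stays valid for all $0 < \beta < \pi$ because $\sin\beta > 0$ throughout that range. In particular the case $2\beta > \pi$, where $\wideparen{pa}$ becomes the major arc, introduces no sign issues. Beyond this bookkeeping there is no genuine difficulty, since the statement is essentially a repackaging of the inscribed angle theorem together with the chord-length formula.
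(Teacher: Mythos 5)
Your proof is correct and follows essentially the same route as the paper's: both identify the central angle as $2\beta$ via the inscribed angle theorem, use the law of sines (extended form, $|pa| = 2R\sin\beta$) to eliminate the radius, and compute the arc length as $2R\beta$. Your added care about which arc is subtended and the validity for $\beta$ near or beyond $\pi/2$ is a minor refinement the paper leaves implicit, not a different method.
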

\begin{proof}
Let $o$ be the center of $D_{p,a,z}$ and let $r$ be the length of its radius, thus, angle $\angle(poa)= 2 \beta$. By the law of sines, $\frac{|pa|}{\sin(\beta)} = 2r$. Therefore, the length of the arc $\wideparen{pa}$ is 
$$2\pi r/ \frac{2\pi}{2\beta} = 2\beta r = \frac{2\beta |pa|}{2\sin(\beta)} = \frac{\beta }{\sin(\beta)}|pa|.$$
\end{proof}
\begin{lemma}\label{lemm:pathBound1}
Consider a wedge $W_{s,r,p}$ in $\DT(P)$ and assume that $\{s,r\},\{s,p\}$ are the shortest edges in $W_{s,r,p}$ incident to $s$ (i.e., $|sr|,|sp| \leq |sx|$ for all $x \in S_{s,r,p} \backslash \{r,p\} $). 
Then, $\delta_{S_{s,r,p}}(r,p) \leq |rp|\frac{\alpha}{\sin(\alpha)}$, where $\alpha = \angle(rsp)$.
\end{lemma}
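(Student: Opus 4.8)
The plan is to interpret the right-hand side geometrically and then compare the path against a circular arc. First I would observe that $\frac{\alpha}{\sin(\alpha)}|rp|$ is exactly the length of the arc $\wideparen{rp}$ of the circumscribing disk $D_{s,r,p}$ lying on the side of the chord $rp$ opposite to $s$: applying Observation~\ref{obs:x/sinx} with the inscribed angle $\angle(rsp)=\alpha$ subtending the chord $rp$ gives $\mathrm{len}(\wideparen{rp})=\frac{\alpha}{\sin(\alpha)}|rp|$. It therefore suffices to show that the path $P_{S_{s,r,p}}$, written as the chain $r=x_0,x_1,\dots,x_m=p$ of consecutive neighbours of $s$ (consecutive Delaunay neighbours are joined by an edge, so this is indeed a path in $\DT(P)$), is no longer than this far arc.

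The second step is to place the chain inside the far circular segment $K$ bounded by the chord $rp$ and the arc $\wideparen{rp}$. By Observation~\ref{obs:in-D} every $x_i\in S_{s,r,p}$ lies inside $D_{s,r,p}$, so I only need that each intermediate $x_i$ lies on the far side of the chord. Here I would use the hypothesis that $\{s,r\}$ and $\{s,p\}$ are the two shortest edges of the wedge, i.e. $|sx_i|\ge\max(|sr|,|sp|)$. Since the distance from $s$ to a point of the segment $rp$ is a convex function along the segment, it attains its maximum $\max(|sr|,|sp|)$ at an endpoint; hence the ray $sx_i$, which lies in the wedge and crosses the chord, meets it at distance at most $\max(|sr|,|sp|)\le|sx_i|$, placing $x_i$ (strictly, unless $x_i\in\{r,p\}$) beyond the chord. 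Thus the whole chain lies in the convex region $K$ (a circular segment is convex, being the intersection of a disk with a halfplane).

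The third step, which I expect to be the main obstacle, is to show that the chain $x_0,\dots,x_m$ is convex and bulges away from $s$, so that the region $R$ bounded by the chain and the chord $rp$ is itself convex. The bound $\angle(x_{i-1}x_ix_{i+1})\ge\pi-\alpha$ from Observation~\ref{obs:pi-x} is by itself insufficient: it controls the magnitude of the turn at $x_i$ but not its direction, and in fact a shallow zig-zag can keep every such angle close to $\pi$ while remaining non-convex and longer than the arc. Ruling this out is the delicate point, and I would prove consistency of the turns from the empty-circle property of the Delaunay triangles $s\,x_{i-1}x_i$ and $s\,x_ix_{i+1}$ sharing the edge $\{s,x_i\}$: the Delaunay condition forces $x_{i+1}\notin D_{s,x_{i-1},x_i}$, which together with the containment in $K$ established above pins the turn at $x_i$ to be toward the far side, giving a single consistent orientation and hence convexity.

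Once convexity is in hand the conclusion is immediate from the monotonicity of perimeter under inclusion of convex sets: since $R\subseteq K$ are both convex, $\mathrm{per}(R)\le\mathrm{per}(K)$. Both perimeters contain the common chord $rp$, with $\mathrm{per}(R)=|rp|+\delta_{S_{s,r,p}}(r,p)$ and $\mathrm{per}(K)=|rp|+\mathrm{len}(\wideparen{rp})$; cancelling the shared term $|rp|$ yields $\delta_{S_{s,r,p}}(r,p)\le\mathrm{len}(\wideparen{rp})=\frac{\alpha}{\sin(\alpha)}|rp|$, as claimed.
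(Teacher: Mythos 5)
Your steps 1, 2 and 4 are sound: the identification of $\frac{\alpha}{\sin(\alpha)}|rp|$ with the far arc of $D_{s,r,p}$ is exactly Observation~\ref{obs:x/sinx}, the argument that every point of $S_{s,r,p}$ lies beyond the chord $rp$ (convexity of the distance along the chord plus the hypothesis $|sx|\ge\max(|sr|,|sp|)$) is correct, and perimeter monotonicity for nested convex regions is standard. But the proof stands or falls with step 3, the convexity of the chain, and the inference you offer there is not valid. Your stated premises are only (i) the adjacent empty-circle conditions $x_{i+1}\notin D_{s,x_{i-1},x_i}$, $x_{i-1}\notin D_{s,x_i,x_{i+1}}$, and (ii) containment of the points in the segment $K$; these do not pin the turn direction. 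Concretely, take $s=(0,0)$, $r=(0,1)$, $p=(1,0)$ (so $\alpha=\pi/2$ and $\max(|sr|,|sp|)=1$), and let $A=1.36(\cos 60^{\circ},\sin 60^{\circ})$, $B=(\cos 45^{\circ},\sin 45^{\circ})$, $C=1.36(\cos 30^{\circ},\sin 30^{\circ})$. All three lie in $K$ (inside $D_{s,r,p}$, beyond the chord $x+y=1$) and at distance at least $1$ from $s$. One checks that $C\notin D_{s,A,B}$ and $A\notin D_{s,B,C}$: both circumdisks have radius about $0.91$, while the opposite point lies at distance about $1.38$ from the corresponding center. Yet $B$ lies on the $s$-side of the line $AC$ (that line is $x+y\approx 1.86$, while $B$ satisfies $x+y\approx 1.41$), so the turn at $B$ is toward $s$ and the chain $r,A,B,C,p$ is not convex. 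Thus your premises are fully consistent with a reflex vertex. What actually excludes this configuration from a genuine Delaunay triangulation is a condition you never invoke: the disk $D_{s,A,B}$ (center $\approx(-0.18,0.89)$) contains the endpoint $r$, which is at distance $\approx 0.21$ from that center, so the triangle $sAB$ could not be Delaunay. Any repair of step 3 therefore has to play the empty-circle property against non-adjacent points of the wedge (in particular against $r$ and $p$), a genuinely global argument that your sketch does not contain; as written, the central claim of your proof is unsupported.

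For comparison, the paper's proof avoids any convexity statement. It inducts on the rank of the angle $\alpha$: it splits the wedge at the shortest interior edge $\{s,a\}$ (so both sub-wedges again satisfy the lemma's hypothesis and have smaller angles), applies the induction hypothesis to the two halves, and then pushes $a$ radially outward to the point $a'$ where the ray $sa$ meets the circle $D_{s,p,r}$; the hypothesis that $\{s,r\},\{s,p\}$ are shortest gives $|ra|\le|ra'|$ and $|ap|\le|a'p|$, and Observation~\ref{obs:x/sinx} turns the two resulting terms into arcs of $D_{s,p,r}$ whose lengths add up to exactly $\frac{\alpha}{\sin(\alpha)}|rp|$. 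You could either adopt that decomposition, or try to prove the convexity of the chain as a standalone fact; the latter would be an interesting statement in its own right, but it is not established by the argument you give.
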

\begin{proof}
We prove the claim by induction on the rank of the angle $\alpha$, i.e., the place of $\alpha$ in a nondecreasing order of the angles in $\DT(P)$.

\noindent
{\bf Base case:} 
Angle $\alpha$ is the smallest angle in $\DT(P)$, thus, $\{r,p\} \in \DT(P)$ and clearly $\delta_{S_{s,r,p}}(r,p) \leq |rp|$.\\
{\bf The induction hypothesis:} 
Assume that for every $\alpha' < \alpha$ the claim holds.\\
{\bf The inductive step:}
If $W_{s,r,p}\backslash \{\{s,r\},\{s,p\}\} = \emptyset$, then $\{r,p\} \in \DT(P)$ and we are done. Otherwise, let $\{s,a\}$ be the shortest edge in $W_{s,r,p}\backslash \{\{s,r\},\{s,p\}\}$, i.e., $\{s,a\} = \min_{x \in S_{s,p,r} \backslash \{p,r \}  }\{ |sx| \}$ (see, Figure~\ref{fig:pathBound1}). 

Put $\alpha_1 = \angle(rsa)$ and  $\alpha_2 = \angle(psa)$. Since $\alpha_1, \alpha_2 < \alpha$, by the induction hypothesis, $\delta_{S_{s,r,a}}(r,a) \leq |ra|\frac{\alpha_1}{\sin(\alpha_1)}$, and $\delta_{S_{s,a,p}}(a,p) \leq |ap|\frac{\alpha_2}{\sin(\alpha_2)}$. Notice that $S_{s,r,a} \subseteq S_{s,r,p}$ and $S_{s,a,p} \subseteq S_{s,r,p}$. Thus, 
\begin{eqnarray*}
\delta_{S_{s,r,p}}(r,p) & = & \delta_{S_{s,r,a}}(r,a) + \delta_{S_{s,a,p}}(a,p)  \\
       &\leq &
|ra|\frac{\alpha_1}{\sin(\alpha_1)} + |ap|\frac{\alpha_2}{\sin(\alpha_2)}.
\end{eqnarray*}

Note that by observation~\ref{obs:in-D} $a$ is located inside $D_{s,p,r}$. Let $a'$ be the intersection point of $D_{s,p,r}$ and the extension of $\{s,a\}$. Since $|sp|\leq |sa|$, $\angle(sap) \leq \frac{\pi}{2}$, therefore, $\angle(a'ap)\geq \frac{\pi}{2}$ and $|a'p| \geq |ap|$. Symmetrically, we get $|ra'| \geq |ra|$ and therefore, 
$$\delta_{S_{s,r,p}}(r,p) \leq |ra'|\frac{\alpha_1}{\sin(\alpha_1)} + |a'p|\frac{\alpha_2}{\sin(\alpha_2)}.$$
According to Observation~\ref{obs:x/sinx}, $|ra'|\frac{\alpha_1}{\sin(\alpha_1)}$ and $|a'p|\frac{\alpha_2}{\sin(\alpha_2)}$ are the lengths of the arcs from $r$ to $a'$ ($\wideparen{ra'}$) and from $a'$ to $p$ ($\wideparen{a'p}$) on the boundary of $D_{s,p,r}$, respectively. Moreover, $|rp|\frac{\alpha}{\sin(\alpha)}$ is the length of the arc from $r$ to $p$ on the boundary of $D_{s,p,r}$, which is the sum of $\wideparen{ra'}$ and $\wideparen{a'p}$. Therefore, $$\delta_{S_{s,r,p}}(r,p) \leq |ra'|\frac{\alpha_1}{\sin(\alpha_1)} + |a'p|\frac{\alpha_2}{\sin(\alpha_2)} = |rp|\frac{\alpha}{\sin(\alpha)}.$$
\end{proof}

\begin{figure}[htp]
    \centering
        \includegraphics[width=0.27\textwidth]{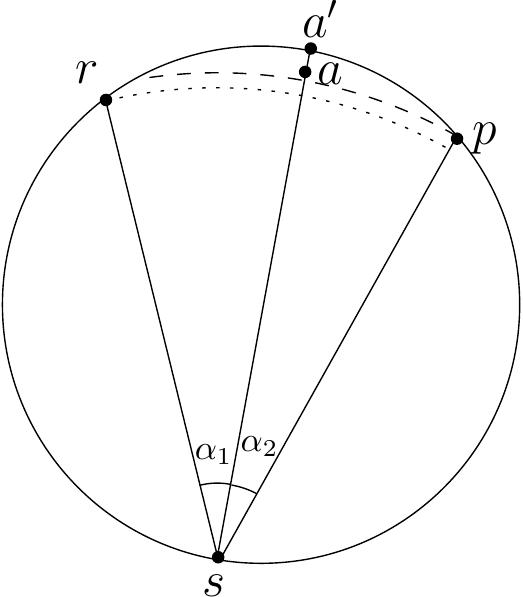}
    \caption{Illustrating the proof of Lemma~\ref{lemm:pathBound1} }
    \label{fig:pathBound1}
\end{figure}

\begin{claim} \label{claim:rightTri}
Let $\triangle(pqr)$ be a right-angled triangle with hypotenuse $(p,r)$. Then, $$\frac{\pi}{2\sqrt{2}}(|pq| + |qr|) \leq \frac{\pi}{2}|pr|.$$
\end{claim}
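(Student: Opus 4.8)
The plan is to reduce the claim to a purely algebraic inequality and then invoke the Pythagorean theorem. First I would cancel the common positive factor $\frac{\pi}{2}$ appearing on both sides; since it is positive this step is harmless, and the statement collapses to the cleaner assertion $\frac{1}{\sqrt{2}}(|pq|+|qr|) \leq |pr|$, equivalently $|pq|+|qr| \leq \sqrt{2}\,|pr|$.

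Next I would introduce the shorthand $a = |pq|$, $b = |qr|$, $c = |pr|$ and use the hypothesis that the hypotenuse is $(p,r)$, so the right angle sits at $q$. The Pythagorean theorem then gives $c^2 = a^2 + b^2$. Since both sides of the target inequality $a + b \leq \sqrt{2}\,c$ are nonnegative, I may square it to obtain the equivalent form $(a+b)^2 \leq 2c^2 = 2(a^2+b^2)$.

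Finally I would expand $(a+b)^2 = a^2 + 2ab + b^2$, so the inequality becomes $2ab \leq a^2 + b^2$, i.e.\ $0 \leq (a-b)^2$, which holds unconditionally. (Equivalently, one can phrase the same step as the quadratic-mean/arithmetic-mean inequality $\frac{a+b}{2} \leq \sqrt{\frac{a^2+b^2}{2}}$ applied to the two legs of the triangle.) There is no genuine obstacle in this argument: once the $\frac{\pi}{2}$ factor is stripped away, the content is just the elementary estimate bounding the sum of the two legs by $\sqrt{2}$ times the hypotenuse, with equality precisely when the right triangle is isosceles.
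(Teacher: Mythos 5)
Your proof is correct, but it takes a genuinely different route from the paper's. The paper argues trigonometrically: setting $\beta = \angle(prq)$, it writes the legs as $|pq| = |pr|\sin(\beta)$ and $|qr| = |pr|\cos(\beta)$ via the law of sines, and then uses the fact that $\sin(\beta)+\cos(\beta)$ is maximized at $\beta = \pi/4$, where it equals $\sqrt{2}$. You instead cancel the common factor, invoke the Pythagorean theorem $c^2 = a^2 + b^2$, and square the target inequality $a+b \leq \sqrt{2}\,c$ to reduce it to $2ab \leq a^2+b^2$, i.e.\ $(a-b)^2 \geq 0$. Both arguments rest on the same underlying fact --- the sum of the legs of a right triangle is at most $\sqrt{2}$ times the hypotenuse, with equality exactly in the isosceles case --- but your algebraic version is more self-contained: the paper's maximization step for $\sin(\beta)+\cos(\beta)$ is asserted without justification (it is standard, e.g.\ via $\sin\beta+\cos\beta=\sqrt{2}\sin(\beta+\pi/4)$), whereas your chain bottoms out in an unconditional square. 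The paper's trigonometric parametrization, on the other hand, is consistent in style with the surrounding lemmas, which work with angles and the law of sines throughout, and it makes the extremal configuration $\beta=\pi/4$ explicit.
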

\begin{proof}
Let $\beta$ denote the angle $\angle(prq)$. By the law of sines, $|pq| = |pr|\sin(\beta)$ and $|qr| = |pr|\cos(\beta)$. Therefore,
\begin{eqnarray*}
\frac{\pi}{2\sqrt{2}}(|pq| + |qr|) & = &  \frac{\pi}{2\sqrt{2}}|pr|(\sin(\beta) + \cos(\beta)) \\
 & \leq^{(*)} & 
\frac{\pi}{2\sqrt{2}}|pr|(\frac{1}{\sqrt{2}} + \frac{1}{\sqrt{2}}) \\
 & = & \frac{\pi}{2\sqrt{2}}(\frac{2}{\sqrt{2}})|pr|  \\
 & = &\frac{\pi}{2}|pr|. 
\end{eqnarray*}
(*) The sum $\sin(\beta) + \cos(\beta)$ is maximized when $\beta = \frac{\pi}{4}$. 
\end{proof}

\begin{lemma}
\label{lemma:pathBound}
Let $\DT(P)$ be the Delaunay triangulation of the set of points $P$ and let $W_{s,r,p}$ be a wedge in $\DT(P)$, such that $\{s,r\}$ is the shortest edge in $W_{s,r,p}$ ($|sr| \leq |sx| \ \forall x \in S_{s,r,p}$). Let $r'$ be the projection of $r$ on $\{s,p\}$. Then, $$\delta_{S_{s,r,p}}(r,p) \leq \frac{\pi}{2\sqrt{2}}(|pr'| + |r'r|),$$  where $\alpha = \angle(rsp) \leq \frac{\pi}{4}$ (see,~Figure~\ref{fig:pathBound}).
\end{lemma}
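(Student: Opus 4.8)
The plan is to prove the bound by strong induction on the number of points strictly inside the wedge (equivalently, on the rank of the angle $\alpha$ in the sorted list of angles of $\DT(P)$), exactly as in the proof of Lemma~\ref{lemm:pathBound1}, but now with only the single endpoint edge $\{s,r\}$ assumed short. First I would fix coordinates with $s$ at the origin and the ray $sp$ along the positive $x$-axis, so that $r=(|sr|\cos\alpha,|sr|\sin\alpha)$, its foot is $r'=(|sr|\cos\alpha,0)$, and $p=(|sp|,0)$. Because $\{s,r\}$ is the shortest edge we have $|sr|\le|sp|$, hence $r'$ lies on the segment $sp$ and $|pr'|+|r'r|=(|sp|-|sr|\cos\alpha)+|sr|\sin\alpha$; note also that $|rp|\le|pr'|+|r'r|$, since $(r,r',p)$ is right-angled at $r'$. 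The base case (empty wedge) is immediate: $\{r,p\}\in\DT(P)$, so $\delta_{S_{s,r,p}}(r,p)=|rp|\le|pr'|+|r'r|\le\frac{\pi}{2\sqrt2}(|pr'|+|r'r|)$.

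For the inductive step I would let $\{s,a\}$ be the shortest edge of the wedge other than $\{s,r\}$, so that $|sa|\le|sp|$ and $|sa|\le|sx|$ for every other interior neighbour $x$, and split the wedge at $a$ into $W_{s,r,a}$ and $W_{s,a,p}$, writing $\alpha_1=\angle(rsa)$, $\alpha_2=\angle(asp)$, $\alpha_1+\alpha_2=\alpha$. In $W_{s,r,a}$ both $\{s,r\}$ and $\{s,a\}$ are the two shortest edges, so Lemma~\ref{lemm:pathBound1} applies and gives $\delta_{S_{s,r,a}}(r,a)\le|ra|\frac{\alpha_1}{\sin\alpha_1}\le\frac{\pi}{2\sqrt2}|ra|$, using that $\frac{x}{\sin x}$ is increasing and $\alpha_1\le\alpha\le\frac{\pi}{4}$. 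In $W_{s,a,p}$ the edge $\{s,a\}$ is the shortest, so the induction hypothesis yields $\delta_{S_{s,a,p}}(a,p)\le\frac{\pi}{2\sqrt2}(|pa'|+|a'a|)$, where $a'$ is the foot of the perpendicular from $a$ onto $sp$. Since $|sa|\ge|sr|$ and $\alpha_2\le\alpha$, the projection $a'$ lies between $r'$ and $p$, so the horizontal parts telescope: $|pr'|-|pa'|=x_a-x_r\ge 0$, where $(x_a,y_a),(x_r,y_r)$ denote the coordinates of $a,r$. Adding the two contributions, it then suffices to establish $|ra|\frac{\alpha_1}{\sin\alpha_1}\le\frac{\pi}{2\sqrt2}\big((x_a-x_r)+(y_r-y_a)\big)$.

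When $a$ lies no higher than $r$ (i.e. $y_a\le y_r$) this is easy: both $x_a-x_r$ and $y_r-y_a$ are nonnegative legs of a right triangle with hypotenuse $|ra|$, so $(x_a-x_r)+(y_r-y_a)\ge|ra|$, and the required inequality follows at once from $\frac{\alpha_1}{\sin\alpha_1}\le\frac{\pi}{2\sqrt2}$. The main obstacle is the opposite case, which the empty-circle property (Observation~\ref{obs:in-D}) genuinely permits: the split point $a$ can project above $r$ ($y_a>y_r$), and then the naive telescoping over-counts the height $|a'a|$ and the bound above fails. Handling this case is where I expect the real work to lie, and where Claim~\ref{claim:rightTri}, together with the empty-circle bound on the position of $a$ and Observation~\ref{obs:pi-x} controlling the turning of the chain, must be used to convert the chord estimate $|ra|\frac{\alpha_1}{\sin\alpha_1}$ into the $\frac{\pi}{2\sqrt2}(\mathrm{horizontal}+\mathrm{vertical})$ form and to absorb the excess height into the slack between the arc length $|rp|\frac{\alpha}{\sin\alpha}$ and the target $\frac{\pi}{2\sqrt2}(|pr'|+|r'r|)$. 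I would therefore treat $y_a>y_r$ by a separate argument, most likely re-splitting so that the upward excursion to $a$ and the subsequent descent to $p$ are charged together against the chord $\{r,p\}$ rather than against $a$'s own projection.
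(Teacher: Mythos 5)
Your proposal is, up to the point where it stops, a faithful reconstruction of the paper's own proof: the same induction on the rank of $\alpha$, the same split at the point $a$ of $S_{s,r,p}\setminus\{r\}$ nearest to $s$ (with the case $|sa|\geq|sp|$ dispatched directly by Lemma~\ref{lemm:pathBound1}), the same two bounds --- Lemma~\ref{lemm:pathBound1} on the leg from $r$ to $a$ and the induction hypothesis on the leg from $a$ to $p$ --- and the same telescoping of projections. The difference is that you stop honestly at the case $y_a>y_r$, observing (correctly) that there the reduced inequality $|ra|\frac{\alpha_1}{\sin\alpha_1}\leq\frac{\pi}{2\sqrt{2}}\bigl((x_a-x_r)+(y_r-y_a)\bigr)$ is simply false, and you offer only an intention (``re-splitting \dots charged against the chord $\{r,p\}$'') rather than an argument. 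As a standalone proof this is therefore incomplete: the hard case is identified but not proved.

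You should know, however, that this is exactly the hole in the paper's proof as well. The paper's final step $(*)$ uses the identity $|rb|+|ba|+|pa'|+|br'|=|pr'|+|rr'|$, where $b$ is the projection of $a$ onto the line through $r$ and $r'$; this needs $|rb|+|br'|=|rr'|$, which holds only when $b$ lies on the segment between $r'$ and $r$, i.e.\ only in your easy case $y_a\leq y_r$, and the paper never justifies that. It can in fact fail. Take $s=(0,0)$, $r=(\frac{\sqrt{2}}{2},\frac{\sqrt{2}}{2})$, $a=(4.5,2)$, $p=(10,0)$: the circumcircle of $\triangle(sra)$ excludes $p$ and that of $\triangle(sap)$ excludes $r$, so the Delaunay triangulation of these four points contains the edges $\{s,r\},\{s,a\},\{s,p\},\{r,a\},\{a,p\}$; thus $S_{s,r,p}=\{r,a,p\}$, $\{s,r\}$ is the shortest edge of the wedge, and $\alpha=\pi/4$. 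Here $y_a=2>y_r=\frac{\sqrt{2}}{2}$, and
$$|ra|+|pa'|+|a'a|\approx 4.01+5.5+2=11.51 \;>\; 10=|pr'|+|rr'| ,$$
so $(*)$ is false, while the lemma's conclusion still holds ($|ra|+|ap|\approx 9.86\leq\frac{\pi}{2\sqrt{2}}\cdot 10\approx 11.11$). Worse, in this configuration
$$|ra|+\tfrac{\pi}{2\sqrt{2}}\bigl(|pa'|+|a'a|\bigr)\approx 4.01+8.33=12.34>11.11,$$
so even if the first leg is charged at its exact length, adding the induction-hypothesis bound for the second leg already overshoots the target: no patch of the last line can save the decomposition, and some genuine re-charging of the excursion through $a$ (as your closing sentence suggests) is unavoidable. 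In short, your diagnosis of where the real work lies is correct, but that work is done neither in your proposal nor in the paper.
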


\begin{proof}
We prove the lemma by induction on the rank of the angle $\alpha$.

\noindent 
{\bf Base case:} Angle $\alpha$ is the smallest angle in $\DT(P)$; therefore, $\{r,p\} \in \DT(P)$ and clearly $\delta_{S_{s,r,p}}(r,p) \leq |rp|$.

\noindent
{\bf The induction hypothesis:} 
Assume the claim holds for every angle $\alpha' < \alpha$.

\noindent
{\bf The inductive step:}
If $S_{s,r,p} \backslash \{r,p \} = \emptyset$, then $\{r,p\} \in \DT(P)$ and we are done. Otherwise, recall that from the empty cycle property of Delaunay triangulation it follows that each $x \in S_{s,r,p}$ is inside $D_{s,p,r}$. Let $a \in S_{s,r,p}$ be a point such that for every $x \in S_{s,r,p}\backslash \{r\}$, $|sa|\leq |sx|$. If $|sa| \geq |sp|$ by Lemma~\ref{lemm:pathBound1} 
$$\delta_{S_{s,r,p}}(r,p) \leq \frac{\alpha}{\sin(\alpha)}|rp| \leq \frac{\pi}{2\sqrt{2}}|rp| \leq \frac{\pi}{2\sqrt{2}}(|pr'|+|r'r|)$$ and we are done. 
Otherwise, ($|sa| < |sp|$), let $a'$ be the projection of $a$ on $\{s,p\}$. Denote $\alpha_1 = \angle(asp)$ and $\alpha_2 = \angle(asr)$. Since $\alpha_1 < \alpha$, we can apply the induction hypothesis and get 
$$ \hspace{4.2cm} \delta_{S_{s,r,a}}(a,p) \leq \frac{\pi}{2\sqrt{2}}(|pa'| + |a'a|). \hspace{4.9cm} (1) $$ Moreover, by Lemma~\ref{lemm:pathBound1}   
\begin{eqnarray*}
 \hspace{3.8cm} \delta_{S_{s,r,a}}(r,a) & \leq & \frac{\alpha_2}{\sin(\alpha_2)}|ra| \hspace{5.3cm} (2) \\  
         & \leq &  \frac{\alpha}{\sin(\alpha)}|ra| \\
         & \leq & \frac{\pi}{2\sqrt{2}}|ra|  \qquad \qquad (since, \  \alpha_2 \leq \alpha \leq \frac{\pi}{4}).
\end{eqnarray*}
Therefore, 
\begin{eqnarray*}
 \delta_{S_{s,r,p}}(r,p)  & \leq^{\ \  } & \delta_{S_{s,r,a}}(r,a) + \delta_{S_{s,a,p}}(a,p)  \\
       & \leq^{(2)} & \frac{\pi}{2\sqrt{2}}|ra| + \delta_{S_{s,a,p}}(a,p)  \\
       & \leq^{(1)} & \frac{\pi}{2\sqrt{2}}|ra| + \frac{\pi}{2\sqrt{2}}(|pa'| + |a'a|) \\ 
          & =^{\ \ } & \frac{\pi}{2\sqrt{2}}(|ra| + |pa'| + |a'a|) \\
          & \leq^{(*)} & \frac{\pi}{2\sqrt{2}}(|pr'| + |rr'|).
\end{eqnarray*}
The last inequality (*) is obtained by the following. Let $b$ denote the projection of $a$ on $(r',r)$. Thus, $|ba| = |a'r'|$ and  $|br'| = |a'a|$. Therefore,
\begin{eqnarray*}
|ra| + |pa'| + |a'a|  & \leq^{(**) \ \ } & |rb| + |ba| + |pa'| + |aa'|  \\
                      & =^{\qquad} & |rb| + |a'r'| + |pa'| + |br'| \\ 
                      & =^{\qquad} & |pr'| + |rr'|. 
\end{eqnarray*}
Inequality (**) follows by triangle inequality, $|ra| \leq |rb| + |ba|$.
\end{proof}

\begin{figure}[htp]
    \centering
        \includegraphics[width=0.27\textwidth]{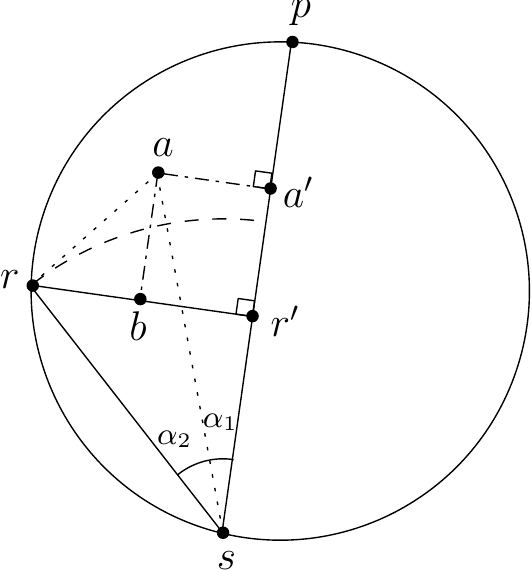}
    \caption{Illustrating the proof of Lemma~\ref{lemma:pathBound} }
    \label{fig:pathBound}
\end{figure}

Claim~\ref{claim:rightTri} and Lemma~\ref{lemma:pathBound} leads to the following corollary.
\begin{corollary}\label{cor:pathBound}
Let $\DT(P)$ be the Delaunay triangulation of set of points $P$ and let $s,r,p$ be points in $P$. If $\{s,r\}$ is the shortest edge in $W_{s,r,p}$, then $$\delta_{\DT(P)}(r,p) \leq \frac{\pi}{2}|rp|,$$ where $\alpha = \angle(rsp) \leq \frac{\pi}{4}$.
\end{corollary}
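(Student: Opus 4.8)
\textbf{Proof proposal for Corollary~\ref{cor:pathBound}.}

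The plan is to combine Lemma~\ref{lemma:pathBound} with Claim~\ref{claim:rightTri} in the obvious way, and then argue that the bound on the restricted path $\delta_{S_{s,r,p}}(r,p)$ transfers to a bound on the genuine Delaunay distance $\delta_{\DT(P)}(r,p)$. First I would apply Lemma~\ref{lemma:pathBound}, whose hypotheses match exactly: $\{s,r\}$ is the shortest edge in the wedge $W_{s,r,p}$ and $\alpha = \angle(rsp) \leq \frac{\pi}{4}$. This gives $\delta_{S_{s,r,p}}(r,p) \leq \frac{\pi}{2\sqrt{2}}(|pr'| + |r'r|)$, where $r'$ is the projection of $r$ onto the segment $\{s,p\}$.

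Next I would invoke Claim~\ref{claim:rightTri}. The point $r'$ is the foot of the perpendicular from $r$ to the line through $s$ and $p$, so the triangle $\triangle(pr'r)$ has a right angle at $r'$, with hypotenuse $(p,r)$. Setting the roles in the claim as $p \mapsto p$, $q \mapsto r'$, $r \mapsto r$, the claim yields
$$\frac{\pi}{2\sqrt{2}}(|pr'| + |r'r|) \leq \frac{\pi}{2}|pr|.$$
Chaining this with the output of Lemma~\ref{lemma:pathBound} gives $\delta_{S_{s,r,p}}(r,p) \leq \frac{\pi}{2}|rp|$.

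The only remaining point is to pass from the restricted quantity $\delta_{S_{s,r,p}}(r,p)$, which is the length of the path in $\DT(P)$ from $r$ to $p$ using only vertices of $S_{s,r,p}$, to $\delta_{\DT(P)}(r,p)$, the genuine shortest-path distance in the full Delaunay triangulation. Here I would observe that the restricted path is itself a path in $\DT(P)$ (every edge of $P_{S_{s,r,p}}$ is a Delaunay edge), so the unrestricted shortest path can only be shorter: $\delta_{\DT(P)}(r,p) \leq \delta_{S_{s,r,p}}(r,p)$. Combining the three inequalities completes the proof.

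I expect the only subtlety — and hence the main obstacle worth flagging — to be the implicit well-definedness of $r'$ lying so that $\triangle(pr'r)$ really is a right triangle with hypotenuse $(p,r)$, which requires $r'$ to be the orthogonal projection as in Lemma~\ref{lemma:pathBound} rather than an arbitrary point on $\{s,p\}$; this is exactly the configuration Claim~\ref{claim:rightTri} was tailored for, so the two results dovetail and no genuine new difficulty arises. The corollary is essentially a one-line composition of the two preceding results together with the monotonicity of shortest-path distance under edge-set restriction.
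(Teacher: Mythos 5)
Your proposal is correct and matches the paper's own (implicit) argument exactly: the paper states that Corollary~\ref{cor:pathBound} follows from Claim~\ref{claim:rightTri} and Lemma~\ref{lemma:pathBound}, and your chaining of the two bounds via the right triangle $\triangle(p\,r'\,r)$, together with the observation that $\delta_{\DT(P)}(r,p) \leq \delta_{S_{s,r,p}}(r,p)$ since the restricted path is itself a path in $\DT(P)$, is precisely that composition spelled out.
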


\begin{lemma}\label{lemma:shortest}
Let $\{s,r\}, \{s,p\}$ be two edges in $ C_{s_i} \cap \DT(P)$ such that $\{s,r\}$ has been chosen by Algorithm~\ref{alg:AlgMain} to be added to $E$. Then for every $\{s,x\} \in W_{s,r,p}, |sx| \geq min\{|sr|,|sp|\}$.
\end{lemma}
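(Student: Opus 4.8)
The plan is to argue by contradiction and to read the hypothesis ``$\{s,r\}$ was added'' through the lens of the algorithm's processing order. Suppose some $x\in S_{s,r,p}$ has $|sx|<\min\{|sr|,|sp|\}$; in particular $|sx|<|sr|$. Since $x$ lies strictly between $r$ and $p$ and both $r,p\in C_{s_i}$, the edge $\{s,x\}$ also lies in $C_{s_i}$. Because edges are processed in nondecreasing length order, $\{s,x\}$ is examined strictly before $\{s,r\}$. By Observation~\ref{obs:obs1}, at the moment $\{s,r\}$ is added the cone $C_{s_i}$ is empty, and since edges are only ever inserted into $E$, $C_{s_i}$ is empty at every earlier moment as well; in particular it is empty when $\{s,x\}$ is examined. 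Hence $s$ ``agrees'' on $\{s,x\}$, yet $\{s,x\}$ is not inserted (otherwise $C_{s_i}\cap E\neq\emptyset$ when $\{s,r\}$ is later processed, contradicting its insertion). Therefore the rejection of $\{s,x\}$ must come from $x$'s side.

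First I would extract the obstructing edge. Since $x$ disagrees, the cone $C_{x_j}$ of $x$ that contains $\{x,s\}$ already holds an edge $\{x,w\}\in E$ processed earlier, so $|xw|\le|sx|$, and since $\{x,s\}$ and $\{x,w\}$ share the $\tfrac{\pi}{4}$-cone $C_{x_j}$ we have $\angle(sxw)\le\tfrac{\pi}{4}$. A short law-of-cosines computation, using $|xw|\le|sx|$ together with $\cos\angle(sxw)\ge\tfrac{1}{\sqrt2}$ (so that $|sw|^2=|sx|^2+|xw|^2-2|sx||xw|\cos\angle(sxw)<|sx|^2$), then yields $|sw|<|sx|$; that is, the obstructing endpoint $w$ is strictly closer to $s$ than $x$ is. Note also that $\{x,w\}\in E\subseteq\DT(P)$, so $w$ is a genuine Delaunay neighbour of $x$.

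The hard part will be turning ``$w$ is a Delaunay neighbour of $x$, lies within $\tfrac{\pi}{4}$ of the direction $x\!\to\!s$, and is closer to $s$ than $x$'' into a contradiction with the empty-circle property. The point $x$ lies strictly between $r$ and $p$, hence on the $p$-side of the ray $sr$ and on the $r$-side of the ray $sp$; the idea is that the obstructing point $w$, being close to $x$ but nearer $s$, must straddle one of these rays. Concretely, I would split on which side of the line through $s$ and $x$ the point $w$ falls, and show that in one sub-case $x$ and $w$ lie on opposite sides of the segment $\{s,r\}$, both closer to $s$ than $r$, so that every disk through $s$ and $r$ contains one of them and $\{s,r\}$ cannot be a Delaunay edge; the symmetric sub-case does the same for $\{s,p\}$. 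Either way we contradict $\{s,r\},\{s,p\}\in\DT(P)$. An equivalent and perhaps cleaner packaging is to choose $x$ minimising $|sx|$ over the whole wedge and use Observation~\ref{obs:in-D} to confine $w$ inside $D_{s,r,p}$ as a neighbour of $s$ lying between $r$ and $p$ with $|sw|<|sx|$, contradicting the minimal choice by infinite descent.

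The main obstacle, and the step I expect to need the most care, is exactly this geometric localisation of $w$: bounding $\angle(xsw)$ alone is \emph{not} enough to keep $w$ inside the cone $C_{s_i}$ (that angle can reach $\tfrac{3\pi}{8}$), so the argument must lean on the empty-circle property, and not merely on angle bounds, to force the sandwich that destroys the Delaunay-ness of the offending far edge. I would also fix a consistent tie-breaking rule in the sorted list $L$, so that ``processed strictly before'' is well defined for equal-length edges, and check the boundary case in which $\{s,x\}$ happens to lie on the common ray of two cones, where $s$ agreeing requires both adjacent cones to be empty; this still holds because $x$ is strictly interior to $C_{s_i}$.
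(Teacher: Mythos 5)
Your first half is exactly right, and it matches the paper's own argument: the offending edge is examined before $\{s,r\}$ in the sorted order, Observation~\ref{obs:obs1} plus the fact that $E$ only grows during the loop (the Wedge() edges sit in $E^*$ until the very end, so they never interfere with the agreement test) shows $C_{s_i}$ was still empty at that earlier time, hence the rejection must be caused by an edge of $E$ in a cone of the \emph{other} endpoint; your law-of-cosines computation showing the obstructing endpoint $w$ satisfies $|sw|<|sx|$ is also correct. The genuine gap is the step you yourself flag as the hard part: you never produce a Delaunay neighbour of $s$ inside the wedge that is closer to $s$, and both routes you sketch for it fail. In route (a), knowing $\angle(sxw)\le\pi/4$, $|xw|\le|sx|$ and $|sw|<|sx|$ does not force $w$ to straddle $\{s,r\}$ or $\{s,p\}$: the point $w$ can lie strictly between the rays $sr$ and $sp$, in which case neither sub-case applies. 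Moreover, even in the straddling sub-case, the inference ``$x$ and $w$ on opposite sides of $\{s,r\}$, both closer to $s$, hence every disk through $s$ and $r$ contains one of them'' is false: for points on opposite sides of the line through $s$ and $r$ that claim amounts to requiring $\angle(sxr)+\angle(swr)\geq\pi$, which fails, e.g., for two points just outside the circle with diameter $\{s,r\}$, one on each side; then some disk through $s$ and $r$ avoids both points and no contradiction with $\{s,r\}\in\DT(P)$ arises. In route (b), the descent is over edges of $\DT(P)$ incident to $s$ in the wedge, but $w$ is only known to be a Delaunay neighbour of $x$; Observation~\ref{obs:in-D} places members of $S_{s,r,p}$ inside $D_{s,p,r}$, not the converse, so ``$w$ lies in $D_{s,p,r}$'' does not make $w$ a member of $S_{s,r,p}$, and the minimality of $x$ is not contradicted.

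The idea you are missing --- and it is how the paper closes the argument --- is to not use the endpoint of the obstructing edge of $E$ at all. Let $\{s,w\}$ be the shortest edge of the whole wedge (your route (b) choice), let $C_{w_j}$ be the obstructed cone at $w$, and let $\{w,t\}$ be the Delaunay edge angularly adjacent to $\{w,s\}$ inside $C_{w_j}$; it exists because $C_{w_j}$ contains an edge of $E\subseteq\DT(P)$ other than $\{w,s\}$. Since $s$ and $t$ are consecutive Delaunay neighbours of $w$, the triangle $swt$ is a face of $\DT(P)$; hence $\{s,t\}\in\DT(P)$, and since $t$ is angularly adjacent to $w$ around $s$ while $w$ is strictly between $r$ and $p$, the point $t$ lies in the closed wedge. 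This is precisely the localisation you could not obtain for $w$, and it comes for free from triangulation adjacency rather than from any disk argument. Now both needed angle bounds hold by construction: $\angle(swt)\le\pi/4$ (cone at $w$) and $\angle(wst)\le\pi/4$ (both $w$ and $t$ lie in the cone $C_{s_i}$), so $\angle(wts)\ge\pi/2$ and therefore $|st|<|sw|$ --- contradicting the minimality of $\{s,w\}$ if $t$ is interior to the wedge, and contradicting $|sw|<\min\{|sr|,|sp|\}$ directly if $t\in\{r,p\}$.
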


\begin{proof}
Assume, to the contrary, there is a point $\{s,x\} \in W_{s,r,p}$ such that $|sx| < min\{|sr|,|sp|\}$. Let $\{s,w\}$ be the shortest edge among all the edges in $W_{s,r,p}$. Since $|sw|<|sr|$, when $\{s,w\}$ was examined by Algorithm~\ref{alg:AlgMain}, the cone $C_{s_i}$ was empty from edges in $E$. Therefore, the only possible reason that could cause $\{s,w\}$ not to be added to $E$ is that $E$ already contained an edge in the cone with apex $w$ that $\{s,w\}$ belongs to (w.l.o.g to $C_{w_j}$). Let $\{w,t\}$ be an adjacent edge to  $\{w,s\}$ in $C_{w_j}$. Necessarily $\{t,s\} \in \DT(P)$, and it is also in $W_{s,r,p}$. However, 
$$\angle(wts) = \pi - \angle(wst) - \angle(swt) \geq \pi - 2\cdot\frac{\pi}{4} = \frac{\pi}{2},$$ we get $|st|<|sw|$ in contradiction to the assumption that $\{s,w\}$ is the shortest edge among all the edges in $W_{s,r,p}$.
\end{proof}

\begin{claim}\label{claim:triangle}
Let $\triangle(rqp)$ be a triangle with $\angle(rqp) \geq \frac{3\pi}{4}$, then $k|qp|+ d|rq|\leq k|rp|$ for $k \geq \sqrt{2}d$.
\end{claim}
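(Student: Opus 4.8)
The plan is to reduce the claimed inequality to a purely metric statement about the side lengths of the triangle and then apply the law of cosines. I would write $a = |qp|$, $b = |rq|$, and $c = |rp|$, and let $\gamma = \angle(rqp)$ denote the angle opposite the side $c$. Since the hypothesis gives $k \geq \sqrt{2}d$, we have $d/k \leq 1/\sqrt{2}$, so it suffices to prove the single inequality $c \geq a + \frac{1}{\sqrt{2}}b$: multiplying through by $k$ and using $db \leq \frac{k}{\sqrt{2}}b$ then yields $ka + db \leq k(a + \frac{1}{\sqrt{2}}b) \leq kc$, which is exactly the assertion $k|qp| + d|rq| \leq k|rp|$. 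This decoupling removes $k$ and $d$ from the problem entirely, leaving a statement depending only on the geometry of $\triangle(rqp)$.

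To establish $c \geq a + \frac{1}{\sqrt{2}}b$, I would invoke the law of cosines, $c^2 = a^2 + b^2 - 2ab\cos\gamma$. The hypothesis $\gamma \geq \frac{3\pi}{4}$ enters precisely here: cosine is decreasing on $[0,\pi]$, so $\cos\gamma \leq \cos\frac{3\pi}{4} = -\frac{1}{\sqrt{2}}$, giving $-2ab\cos\gamma \geq \sqrt{2}\,ab$ and hence $c^2 \geq a^2 + b^2 + \sqrt{2}\,ab$. Comparing with the square of the target, $\left(a + \frac{1}{\sqrt{2}}b\right)^2 = a^2 + \sqrt{2}\,ab + \frac{1}{2}b^2$, and noting that $\frac{1}{2}b^2 \leq b^2$, we obtain $\left(a + \frac{1}{\sqrt{2}}b\right)^2 \leq c^2$. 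Taking nonnegative square roots finishes the reduced inequality, and the reduction above then closes the claim.

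The argument is almost entirely mechanical; the only genuine decision is the reduction in the first paragraph, which isolates the constant $1/\sqrt{2}$ that the angle bound of $\frac{3\pi}{4}$ exactly produces. I expect the main (mild) obstacle to be bookkeeping: verifying the direction of the cosine inequality and checking that the slack $\frac{1}{2}b^2 \leq b^2$ absorbs the leftover term, rather than any delicate estimate. It is worth noting that the threshold $\gamma = \frac{3\pi}{4}$ is tight for the coefficient $\sqrt{2}$, since it is exactly the value making $-2\cos\gamma = \sqrt{2}$, which is what pairs with the hypothesis $k \geq \sqrt{2}d$.
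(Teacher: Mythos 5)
Your proof is correct, but it takes a different route from the paper's. Both arguments ultimately rest on the same reduced inequality $|rp| \geq |qp| + \frac{1}{\sqrt{2}}|rq|$ (which, after multiplying by $k$ and using $d \leq k/\sqrt{2}$, gives the claim); the difference lies in how that inequality is established. The paper proceeds geometrically: it places an auxiliary point $q'$ on the segment $\{r,p\}$ with $|q'p| = |qp|$, so that $|rp| = |rq'| + |qp|$, and then bounds $|rq| \leq \sqrt{2}|rq'|$ by the law of sines in triangle $\triangle(rqq')$, which requires some bookkeeping on the angles of the isoceles triangle $\triangle(pqq')$ to show the relevant angle lies in $(\frac{\pi}{4}, \frac{\pi}{2})$. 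You instead apply the law of cosines directly: $|rp|^2 \geq |qp|^2 + |rq|^2 + \sqrt{2}|qp||rq|$ from $\cos\angle(rqp) \leq -\frac{1}{\sqrt{2}}$, and compare squares against $\left(|qp| + \frac{1}{\sqrt{2}}|rq|\right)^2$, absorbing the leftover $\frac{1}{2}|rq|^2 \leq |rq|^2$. Your approach buys a purely algebraic argument with no auxiliary construction and no angle case analysis, and it makes explicit where the threshold $\frac{3\pi}{4}$ pairs with the constant $\sqrt{2}$; the paper's construction is more in the spirit of its other proofs (a very similar point $r'$ with $|sr'| = |sr|$ appears in Claim~\ref{cl:rr'AndSp}), which keeps the toolkit uniform across the section.
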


\begin{proof}
 Let $q'$ be the point on $\{r,p\}$, such that $|qp|=|q'p|$ (see,~Figure~\ref{fig:triangle}). Since $\angle(rqp)\geq \frac{3\pi}{4}$, then $|rp|>|qp|$, therefore such a point exists.
 
 Since $|rp|=|rq'|+|q'p|=|rq'|+|qp|$, all there is left to prove is $$ k|qp|+ d|rq|\leq k(|rq'|+|qp|),$$
 which is equivalent to $$ d|rq|\leq k|rq'|.$$ 
 Denote $\angle(pqq')=\angle(pq'q)=\beta$, $\angle(qq'r)=\delta$, and $\angle(q'qr)=\gamma$.  
 Notice that $\beta < \frac{\pi}{2}$, therefore, $\frac{\pi}{2} < \delta <\pi$, and 
 $\frac{\pi}{4} < \gamma <\frac{\pi}{2}$.
 
 By the law of sines, 
 $$\frac{|rq|}{|rq'|}=\frac{\sin(\delta)}{\sin(\gamma)|}\leq\frac{1}{\sin(\frac{\pi}{4})}=\sqrt{2}$$ 
 Thus, $|rq| \leq |rq'|\sqrt{2}$, which finishes the proof.
\end{proof}

\begin{figure}[htp]
    \centering
        \includegraphics[width=0.45\textwidth]{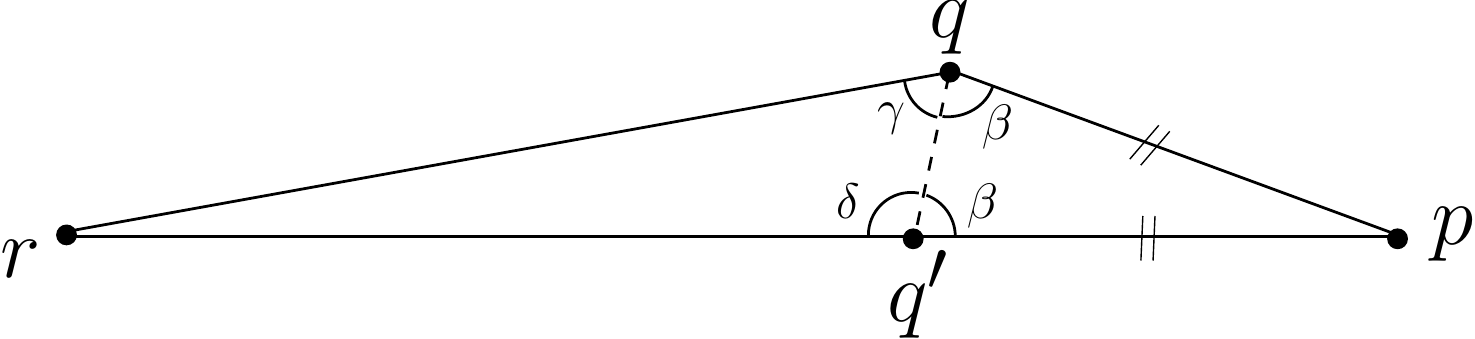}
    \caption{Illustrating the proof of Claim~\ref{claim:triangle}}
    \label{fig:triangle}
\end{figure}

\begin{claim}
\label{cl:rr'AndSp} Let $\{s,r\}$ and $\{s,p\}$ be two edges in $\DT(P)$, such that $|sr| \leq |sp|$ and the angle between $\{s,r\}$ and $\{s,p\}$ ($\angle(rsp)$) is less than $\pi/4$. Then, $$|sr| + K(|rr'|+|r'p|) \leq K|sp|$$ for $K \geq \frac{1}{1- 2 \sin(\frac{\pi}{8})}$, where $r'$ is a point on $\{s,p\}$ such that $|sr'| = |sr|$.
 \end{claim}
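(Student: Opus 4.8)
The plan is to prove the claim $|sr| + K(|rr'|+|r'p|) \leq K|sp|$ by analyzing the geometry of the triangle $\triangle(srr')$ where $r'$ lies on segment $\{s,p\}$ with $|sr'| = |sr|$. Since $|sr'| = |sr|$, the triangle $\triangle(srr')$ is isosceles, and the apex angle at $s$ is $\angle(rsr') = \angle(rsp) = \alpha < \pi/4$. First I would express the side $|rr'|$ in terms of $|sr|$ and $\alpha$ using the isosceles structure: dropping from the base angles $\angle(srr') = \angle(sr'r) = \tfrac{\pi - \alpha}{2}$, the law of sines (or the chord formula) gives $|rr'| = 2|sr|\sin(\alpha/2)$. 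Because $\alpha < \pi/4$, this yields the bound $|rr'| \leq 2|sr|\sin(\pi/8)$, which is the key quantitative estimate driving the whole argument.

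Next I would decompose the segment $\{s,p\}$ as $|sp| = |sr'| + |r'p| = |sr| + |r'p|$, since $r'$ sits between $s$ and $p$ on the segment (this requires $|sr| = |sr'| \leq |sp|$, which follows from the hypothesis $|sr| \leq |sp|$). Substituting this into the desired inequality $|sr| + K(|rr'|+|r'p|) \leq K|sp| = K|sr| + K|r'p|$, the terms $K|r'p|$ cancel from both sides, and the claim reduces to showing $|sr| + K|rr'| \leq K|sr|$, equivalently $|sr| \leq K|sr| - K|rr'| = K(|sr| - |rr'|)$, i.e. $|sr| \leq K(|sr| - |rr'|)$. Rearranging, this is $K \geq \frac{|sr|}{|sr| - |rr'|} = \frac{1}{1 - |rr'|/|sr|}$.

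The final step is to plug in the bound $|rr'|/|sr| \leq 2\sin(\pi/8)$ obtained above, so that $\frac{1}{1 - |rr'|/|sr|} \leq \frac{1}{1 - 2\sin(\pi/8)}$, and observe that since $2\sin(\pi/8) < 1$ the denominator is positive, so the hypothesis $K \geq \frac{1}{1 - 2\sin(\pi/8)}$ suffices to establish the required inequality. This completes the proof.

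I expect the main obstacle to be the bookkeeping that establishes $|rr'| = 2|sr|\sin(\alpha/2)$ cleanly and verifies monotonicity: one must confirm that $\tfrac{1}{1 - x}$ is increasing in $x$ on $[0,1)$ so that the worst case genuinely occurs at the extreme $\alpha = \pi/4$ (giving $x = 2\sin(\pi/8)$), and that $2\sin(\pi/8) < 1$ so the bound is meaningful. A secondary care point is confirming that $r'$ truly lies on the segment between $s$ and $p$ (not on its extension), which is guaranteed by $|sr| \leq |sp|$, so the cancellation of the $|r'p|$ terms is legitimate.
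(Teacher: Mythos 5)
Your proof is correct and follows essentially the same route as the paper's: both exploit the isosceles triangle $\triangle(srr')$ via the law of sines to get $|rr'| = 2\sin(\alpha/2)\,|sr'|$, use the decomposition $|r'p| = |sp| - |sr'|$, and take the worst case $\alpha = \pi/4$ so that $K \geq \frac{1}{1-2\sin(\pi/8)}$ suffices. The only difference is cosmetic: you cancel the $K|r'p|$ terms and solve for $K$, whereas the paper keeps everything on one side and shows the coefficient $1 + K(2\sin(\alpha/2)-1)$ is nonpositive.
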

 \begin{proof}
  Put $\angle(r'sr)=\alpha$ and $\angle(srr')=\beta= \frac{\pi-\alpha}{2}$ (see,~Figure~\ref{fig:rr'AndSp}); then by the law of sines, $|rr'|= \frac{\sin(\alpha)}{\sin(\beta)}|sr'|$.	Therefore,
 	\begin{eqnarray*}
 	|sr| + K(|rr'|+|r'p|)&\leq& |sr| + K(\frac{\sin(\alpha)}{\sin(\beta)}|sr'|+ |r'p|) \\
 	     &=&  |sr| + K(\frac{\sin(\alpha)}{\sin(\beta)}|sr'|+ |sp| - |sr'|) \\
       &=&  |sr| + K((\frac{\sin(\alpha)}{\sin(\beta)} - 1)|sr'|+ |sp|) \\
       &=& |sr| + K((\frac{\sin(\alpha)}{\sin(\frac{\pi-\alpha}{2})}-1)|sr'|+ |sp|) \\
 	     &=& |sr| + K((\frac{\sin(\alpha)}{\cos(\alpha/2)}-1)|sr'|+ |sp|) \\
 	     &=& |sr| + K(2\sin(\alpha/2)-1)|sr'|+ |sp|) \\
 	     &=& |sr|(1 + K(2\sin(\alpha/2)-1)) + K|sp|) \\
 	     &\leq & |sr|(1 + K(2\sin(\pi/8)-1)) + K|sp|) \\
 	     &\leq^{(*)}& K|sp|.
 	\end{eqnarray*}     
 	The last inequality (*) follows from the fact that $(1 + K(2\sin(\alpha/2)-1))$ is less than zero for $K \geq \frac{1}{1- 2 \sin(\frac{\pi}{8})}$.
\end{proof}		

\begin{figure}[htp]
    \centering
        \includegraphics[width=0.6\textwidth]{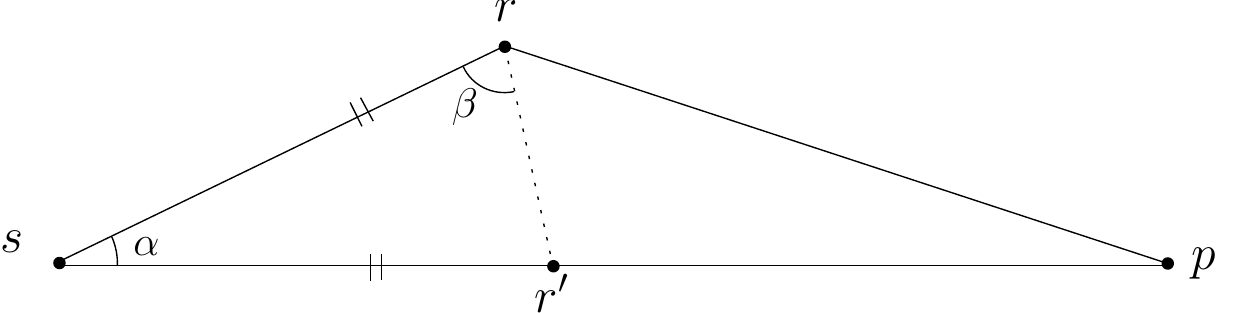}
    \caption{Illustrating the proof of Claim~\ref{cl:rr'AndSp}}
    \label{fig:rr'AndSp}
\end{figure}

\begin{lemma}\label{lemma:lemma5}
The stretch factor of the resulting $t$-spanner of Algorithm~\ref{alg:AlgMain} is $(1+\sqrt{2})^{2}\cdot\delta$, where $\delta$ is the stretch factor of Delaunay triangulation, i.e., $t =(1+\sqrt{2})^{2}\cdot\delta$. 
\end{lemma}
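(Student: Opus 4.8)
The plan is to show that every Delaunay edge $\{p,q\}$ that was discarded by Algorithm~\ref{alg:AlgMain} is nevertheless approximated in the output spanner $G$ by a path of length at most $(1+\sqrt{2})^2 |pq|$, restricted to the neighbors of a common endpoint. Since the Delaunay triangulation is itself a $\delta$-spanner, composing these two approximations yields the claimed stretch factor $(1+\sqrt{2})^2\cdot\delta$. So the heart of the argument is a \emph{local} claim: for any edge $\{p,q\}$ of $\DT(P)$ that is \emph{not} in $E$, there is a path in $G$ from $p$ to $q$ of length at most $(1+\sqrt{2})^2|pq|$. Summing these bounds along a shortest Delaunay path between any two points then gives the global statement.

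First I would fix a discarded edge $\{s,p\}$ and identify the edge $\{s,r\}$ that the algorithm actually selected in the cone $C_{s_i}$ containing $\{s,p\}$; by the cone structure this selected edge exists and $\angle(rsp)\le \pi/4$. By Lemma~\ref{lemma:shortest}, $\{s,r\}$ is (w.l.o.g.) the shortest edge in the wedge $W_{s,r,p}$, so Lemma~\ref{lemma:pathBound} applies and bounds the path $\delta_{S_{s,r,p}}(r,p)$ along the wedge by $\frac{\pi}{2\sqrt{2}}(|pr'|+|r'r|)$, where $r'$ is the projection of $r$ onto $\{s,p\}$. Next I would bound the total detour: the path from $p$ to $q=p$ (in the degenerate naming) actually goes $p \to r \to \cdots$, so I need the three legs, namely the selected edge $\{s,r\}$, the wedge path from $r$ to $p$, and the accounting of $|sr|$ versus $|sp|$. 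Here Claim~\ref{cl:rr'AndSp} is the right tool: with $K=\frac{1}{1-2\sin(\pi/8)}$ it converts $|sr|+K(|rr'|+|r'p|)$ into $K|sp|$, absorbing the wedge detour and the selected edge into a single multiple of the target edge length $|sp|$. Matching constants, $\frac{\pi}{2\sqrt 2}$ and $K$ should both be dominated by $(1+\sqrt2)$, and the two-level composition (edge approximated by a wedge path, wedge path legs themselves possibly discarded and recursively approximated) squares this to $(1+\sqrt2)^2$.

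The key structural point is that the approximating path must itself lie in $G$, and its constituent edges may again have been discarded; so the argument is inductive on edge rank (shortest first, matching the algorithm's processing order). I would set up an induction on the position of $\{s,p\}$ in the sorted list $L$: when $\{s,p\}$ is discarded, the wedge edges $\{s,r\}$ and the chords added by \emph{Wedge()} (steps~\ref{step:wedAdd1}--\ref{step:wedAdd4}) are all shorter and were already handled, so each is either present in $G$ or itself approximated by the induction hypothesis. The role of the \emph{Wedge()} subroutine is exactly to guarantee that the chords $\{q_m,q_{m+1}\}$ of the wedge path $P_{S_{s,r,p}}$ are in $E^*\subseteq E$, so that the path from $r$ to $p$ along $S_{s,r,p}$ is genuinely realized in the spanner rather than merely in $\DT(P)$.

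The hard part will be the bookkeeping that shows the constants multiply out to exactly $(1+\sqrt{2})^2$ and not something larger: one must verify that the per-edge factor from Lemma~\ref{lemma:pathBound} combined with the charging in Claim~\ref{cl:rr'AndSp} (and, where the wedge angle is obtuse, Claim~\ref{claim:triangle}) telescopes cleanly, and that the recursion depth does not compound the constant beyond a single squaring. In particular I expect the delicate case to be when the chord $\{q_i,q_{i+1}\}$ was \emph{not} added by \emph{Wedge()} because $\angle(pq_iq_{i+1})\le\pi/2$ (the guard in steps~\ref{step:wedAdd3}--\ref{step:wedAdd4}); there one must argue that this edge is short enough to have been selected directly by the main loop, so no recursive blow-up occurs. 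Once every discarded edge is charged to a path of length $\le (1+\sqrt2)^2|sp|$ wholly inside $G$, concatenating along a Delaunay $\delta$-spanner path finishes the proof that $t=(1+\sqrt2)^2\cdot\delta$.
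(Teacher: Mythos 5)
Your overall skeleton does match the paper's proof: induction on the rank of edges in nondecreasing length order, identifying the edge $\{s,r\}\in E$ that occupies the cone containing the discarded edge $\{s,p\}$, invoking Lemma~\ref{lemma:shortest} so that $\{s,r\}$ is shortest in the wedge, using \emph{Wedge()} to guarantee the wedge chords lie in $E^*\subseteq E$, and composing with the Delaunay ratio $\delta$ only at the very end. The genuine gap is in the mechanism you propose for the constant. You claim $(1+\sqrt{2})^2$ arises by \emph{squaring} $(1+\sqrt{2})$ through a ``two-level composition,'' and that one only needs to check the recursion ``does not compound the constant beyond a single squaring.'' This cannot work: the recursion depth is not two but unbounded --- the end edges $\{r,t\}$ and $\{q,p\}$ of the wedge path are handled by the induction hypothesis, and their own approximating paths again have end edges that may be absent from $E$, and so on --- so any argument in which the constant is multiplied per level diverges. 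It is also numerically false that the constant $K$ of Claim~\ref{cl:rr'AndSp} is ``dominated by $(1+\sqrt{2})$'': one needs $K\geq \frac{1}{1-2\sin(\pi/8)}\approx 4.26$, while $1+\sqrt{2}\approx 2.41$; only the square $(1+\sqrt{2})^2\approx 5.83$ dominates it. The proof closes only because the \emph{same} constant $c=(1+\sqrt{2})^2$ is an invariant of the induction: one proves the single inequality $|sr|+c\,(|rt|+|pq|)+\frac{\pi}{2}(|ta|+|aq|)\leq c\,|sp|$, in which only the two end edges carry the factor $c$ (induction hypothesis, after showing they are shorter than $\{r,p\}$ and hence than $\{s,p\}$), whereas the middle chords from $t$ to $q$ need no recursion at all --- they are in $E^*$ unconditionally --- and are charged at factor $\frac{\pi}{2}$ via Corollary~\ref{cor:pathBound} applied to the two sub-wedges split at the closest point $a$. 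The value of $c$ is exactly the fixed point of $c=1+2c\tan(\pi/8)$, i.e.\ $c(3-2\sqrt{2})=1$; that it happens to be a perfect square is incidental, not the trace of a two-level composition.

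The second omission is that essentially all of the work lies in absorbing the $\frac{\pi}{2}$-weighted middle pieces and the initial length $|sr|$ into $c\,|sp|$, and your proposal points at the wrong tools for this. The paper uses Observation~\ref{obs:pi-x} (the empty-circle property) to get angles at least $\frac{3\pi}{4}$ at interior path vertices, then Claim~\ref{claim:triangle} with $k=c$ and $d=\frac{\pi}{2}$ (valid since $c\geq\sqrt{2}\cdot\frac{\pi}{2}$) to telescope $c\,|rt|+\frac{\pi}{2}|ta|\leq c\,|ra|$, followed by a case analysis --- whether $a$ or $q$ lies inside $\triangle(srp)$, whether the polygon $(raqp)$ is convex, and whether $\angle(srp)\geq\frac{\pi}{2}$ --- finished off with Claim~\ref{cl:rr'AndSp} and Claim~\ref{cl:maxEq}. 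Your ``delicate case'' is also misidentified: when the angle guard of steps~\ref{step:wedAdd3}--\ref{step:wedAdd4} fails, the paper does not argue the missing chord ``was selected directly by the main loop'' (in general it is not in $E$); it simply applies the induction hypothesis to that shorter edge. The conditional additions are exploited in the opposite situation (Case 2.1, $\angle(srp)\geq\frac{\pi}{2}$), where the chord $\{r,t\}$ \emph{is} guaranteed to be in $E$ and the path is rerouted through it. Without the invariant-constant induction and this obtuse-angle absorption, the bookkeeping you outline cannot produce the bound $(1+\sqrt{2})^2$.
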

\begin{proof}
Let $G=(P,E)$ be the output graph of Algorithm~\ref{alg:AlgMain}.
To prove the Lemma we show that for every edge $\{s,p\} \in \DT(P)$, $\delta_G(s,p)\leq(1+\sqrt{2})^{2}|sp|$. We prove the above by induction on the rank of the edge $\{s,p\}$, i.e., the place of the edge $\{s,p\}$ in a nondecreasing length order of the edges in $\DT(P)$.

\noindent
\textbf{Base case:} Let $\{s,p\}$ be the shortest edge in $\DT(P)$. Then, edge $\{s,p\}$ has been added to $E$ during the first iteration of the loop in step~\ref{loop1}, and therefore $\delta_G(s,p)=|sp|$.

\noindent
\textbf{Induction hypothesis:} Assume for every edge $\{r,q\}\in \DT(P)$ 
shorter than $\{s,p\}$, the Lemma holds, i.e., $\delta_G(r,q)\leq (1+\sqrt{2})^{2}|rq|$.

\noindent
\textbf{The inductive step:} 
If $\{s,p\} \in E$, we are done. 
Otherwise, w.l.o.g. assume $\{s,p\} \in C_{s_i}$ and $\{s,p\} \in C_{p_j}$; then, there exists either an edge $\{s,r\}\in C_{s_i} \cap E$, such that $|sr|\leq|sp|$, or an edge $\{p,r\}\in C_{p_j} \cap E$, such that $|pr|\leq|sp|$. Assume w.l.o.g. there exists an edge $\{s,r\}\in C_{s_i} \cap E $, such that $|sr|\leq|sp|$. By Lemma~\ref{lemma:shortest}, for every $x \in S_{s,r,p}$, $|sx| \geq min\{|sr|,|sp|\}=|sr|$. Let $\{r,t\}$ be the first  edge in $P_{S_{s,p,r}}$, and $\{q,p\}$ the last. Note that all edges of $P_{S_{s,t,q}}$ have been added to $E$ during Algorithm~\ref{alg:wedge}. 
\begin{claim} 
     The edges $\{r,t\}$ and $\{q,p\}$ are shorter than $\{r,p\}$.
\end{claim}
\begin{proof}
If $e \in \{\{r,t\}, \{q,p\}\}$ is outside the triangle $\triangle(srp)$, by Observation~\ref{obs:pi-x} the angles $\angle(rtp), \angle(rqp) \geq \pi-\angle(rsp)\geq \frac{3\pi}{4}$, and therefore, are shorter than $\{r,p\}$. Otherwise, it is bounded inside the triangle $\triangle(prr')$, where $r'$ is a point on $\{s,p\}$, such that $|sr'|=|sr|$ (since $|sr| \leq |sp|$, such a point exists). Therefore, this edge is shorter than $\{r,p\}$. 
\end{proof}

Applying the induction hypothesis on $\{r,t\}$ and $\{q,p\}$ results in:
\begin{eqnarray}
\delta_G(r,t) & \leq & (1+\sqrt{2})^{2}|rt| \\
\delta_G(p,q) & \leq & (1+\sqrt{2})^{2}|pq|. 
\end{eqnarray}
 
Let $\{s,a\}$ be the shortest edge in the wedge $W_{s,t,q}$ (i.e., the closest point to $s$ in  $S_{s,r,p} \backslash \{r,p \}$). By Corollary~\ref{cor:pathBound}, 
\begin{eqnarray}
 \delta_{S_{s,t,a}} & \leq & \frac{\pi}{2}|ta| \\
\delta_{S_{s,a,q}}  & \leq  & \frac{\pi}{2}|aq|. 
\end{eqnarray}  
Since $|sr| \leq |st|$ and $|sa| \leq |st|$, the angle $\angle(rta)$ facing towards $s$ is less than $\pi$. By Observation~\ref{obs:pi-x} we get that $\angle(rta)\geq \frac{3\pi}{4}$. Applying Claim~\ref{claim:triangle} on the triangle $\triangle(rta)$, with $d = \frac{\pi}{2}$ gives us 
\begin{eqnarray}
(1+\sqrt{2})^{2}|rt|+ \frac{\pi}{2}|ta|  & \leq  & (1+\sqrt{2})^{2}|ra|.
\end{eqnarray}

Therefore, 
\begin{eqnarray*}
 \delta_G(s,p) & \leq^{\qquad} & |sr|+ \delta_G(r,p) \\
          & \leq^{\qquad} & |sr|+ \delta_G(r,t)+ \delta_G(t,q) +\delta_G(q,p) \\
         & \leq^{(1),(2)} & |sr|+(1+\sqrt{2})^{2}(|rt|+|pq|) + \delta_G(t,q) \\
        & \leq^{\qquad} & |sr|+(1+\sqrt{2})^{2}(|rt|+|pq|) + \delta_{S_{s,t,a}}+ \delta_{S_{s,a,q}} \\
        & \leq^{(3),(4)} & |sr|+(1+\sqrt{2})^{2}(|rt|+|pq|)  + \frac{\pi}{2}(|ta|+|qa|)\\
        & \leq^{(5)} & |sr|+(1+\sqrt{2})^{2}(|ra|+|pq|)  + \frac{\pi}{2}(|aq|).
\end{eqnarray*}  

There are two cases regarding the location of points $q$ and $a$:
\begin{itemize}
	\item \textbf{Case 1:} Either point $q$ or $a$ is inside the triangle $\triangle(srp)$.\\
 	Let $r'$ be a point on $\{s,p\}$ such that $|sr|=|sr'|$. Notice $|sr| \leq |sp|$, and therefore, such a point exists. Since $|sr| \leq |sa|$ and $|sr|\leq |sq|$, $q$ and $t$ lie outside the disk centered at $s$ and with radius $|sr|$. Therefore, either point $q$ or point $a$ is located inside the triangle $\triangle(rr'p)$.
 
   Since $(1+\sqrt{2})^{2} > \frac{1}{1-2 \sin(\pi/8)}$, by Claim~\ref{cl:rr'AndSp} we get,
   $$|sr| + (1+\sqrt{2})^{2}(|rr'|+|r'p|) \leq (1+\sqrt{2})^{2}|sp|.$$
   Therefore, it is enough to show that 
 	$$(1+\sqrt{2})^{2}(|ra|+|pq|)  + \frac{\pi}{2}(|aq|) \leq (1+\sqrt{2})^{2}(|rr'|+|r'p|).$$
 	
 	Observe the following two cases regarding the convexity of the polygon $(raqp)$:
 	\begin{itemize}
 	\item \textbf{Case 1.1:} The polygon $(raqp)$ is convex.\\
 	Since $\frac{\pi}{2} < (1+\sqrt{2})^{2}$, we get
	\begin{eqnarray*}
	(1+\sqrt{2})^{2}(|ra|+|qp|) + \frac{\pi}{2}(|aq|) 
	                              & < & (1+\sqrt{2})^{2}(|ra|+|qp|+|aq|) \\
 	                              & \leq & (1+\sqrt{2})^{2}(|rr'|+|r'p|).
\end{eqnarray*}
 The last inequality follows from the convexity of the polygon $(raqp)$. 

 \item \textbf{Case 1.2:} The polygon $(raqp)$ is not convex.\\
 	The vertex that violates the convexity is either $a$ or $q$, and the other vertex is inside triangle  $\triangle(srp)$ (see,~Figure~\ref{fig:s.f_case1}).	 Assume w.l.o.g. that $a$ is the vertex that violates the convexity; then the angle $\angle(raq)$ facing towards $s$ is less than $\pi$. By Observation~\ref{obs:pi-x} $\angle(raq)\geq \frac{3\pi}{4}$; therefore, applying Claim~\ref{claim:triangle} on the triangle $\triangle(raq)$ with $d = \frac{\pi}{2}$ gives us
\begin{eqnarray}
\label{eq:eq5}  	
 	 (1+\sqrt{2})^{2}|ra| +  \frac{\pi}{2}|aq| \leq  (1+\sqrt{2})^{2}|rq|.
 \end{eqnarray} 	
 	
 	Thus, we get
\begin{eqnarray*} 	 
 	(1+\sqrt{2})^{2}(|ra|+|qp|)+ \frac{\pi}{2}(|aq|) & \leq^{(\ref{eq:eq5} )} & (1+\sqrt{2})^{2}(|rq|+|qp|) \\
 	& \leq^{(*)} &  (1+\sqrt{2})^{2}(|rr'|+|r'p|)
 \end{eqnarray*}
 The last inequality (*) follows from the convexity of triangle $\triangle(rqp)$.  	%
\end{itemize}

\begin{figure}[htp]
    \centering
        \includegraphics[width=1\textwidth]{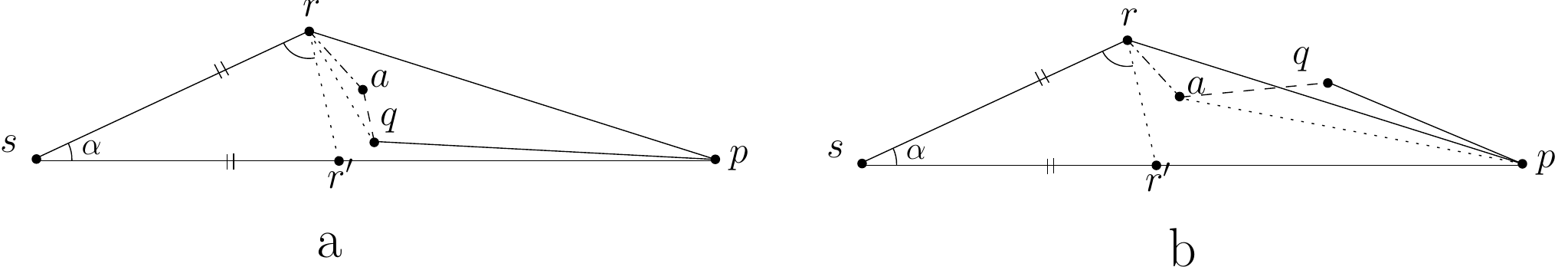}
            \caption{Illustrating the proof of Lemma~\ref{lemma:lemma5}, Case 1.2. In Figure (a) point $a$ violates the convexity of polygon $(raqp)$, while in Figure (b) point $q$ violates the convexity of the polygon.}
    \label{fig:s.f_case1}
\end{figure}

\item \textbf{Case 2:} Both points $q$ and $a$ are outside the triangle $\triangle(srp)$.\\
	 In this case, the edges added to $E$ depend on the angle $\angle(srt)$. There are two cases:
 \begin{itemize}
 		\item \textbf{Case 2.1:} Angle $\angle(srp)\geq \frac{\pi}{2}$. \\
 		In this case, $\angle(srt)\geq \angle(srp)\geq \frac{\pi}{2}$, and therefore, the algorithm	adds the edge $\{r,t\}$ to $E$. Thus, instead of showing 	
 		$$|sr|+(1+\sqrt{2})^{2}(|ra|+|qp|)+ \delta_{S_{s,a,q}}\leq (1+\sqrt{2})^{2}|sp|,$$
 		it is enough to show $$|sr|+(1+\sqrt{2})^{2}|qp|+ \delta_{S_{s,r,q}}\leq (1+\sqrt{2})^{2}|sp|.$$	By Corollary~\ref{cor:pathBound}, 
 		\begin{eqnarray}
 		\label{eq:eq7}   
 		   \delta_{S_{s,r,q}}\leq \frac{\pi}{2}|rq|.
 		\end{eqnarray}   
 		By Observation~\ref{obs:pi-x}, $\angle(rqp)\geq \frac{3\pi}{4}$, and by applying Claim~\ref{claim:triangle} on the triangle $\triangle(rqp)$ with $d = \frac{\pi}{2}$ we get
 \begin{eqnarray}
 		\label{eq:eq8}
 		(1+\sqrt{2})^{2}|qp| + \frac{\pi}{2}|rq| \leq (1+\sqrt{2})^{2}|rp|.
 		\end{eqnarray}
 		Thus, 
 		\begin{eqnarray*}
 		 |sr|+(1+\sqrt{2})^{2}|qp|+ \delta_{S_{s,r,q}} 
 		     &\leq^{(\ref{eq:eq7})} & |sr|+(1+\sqrt{2})^{2}|qp| + \frac{\pi}{2}|rq| \\
 		     &\leq^{(\ref{eq:eq8})} & |sr|+(1+\sqrt{2})^{2}|rp| \\
 	       &\leq^{(**)} & |sr|+(1+\sqrt{2})^{2}(|rr'| +|pr'|) \\
 	       &\leq^{(***)} &  (1+\sqrt{2})^{2} |sp|
 		 \end{eqnarray*}
 		Inequality (**) follows from triangle inequality for any point $r'$, and thus it also holds for a point $r'$ on $\{s,p \}$, such that $|sr|=|sr'|$. Therefore, from Claim~\ref{cl:rr'AndSp} and since $(1+\sqrt{2})^{2}  > \frac{1}{1-2 \sin (\pi/8)}$, inequality (***) follows.  
		
	\item \textbf{Case 2.2:} Angle $\angle(srp)< \frac{\pi}{2}$. \\
      Since $a$ and $q$ are outside $\triangle(rps)$, either angle $\angle(raq)$ or angle $\angle(aqp)$ (facing towards $s$) is less than $\pi$. Assume w.l.o.g. that $\angle(raq) < \pi$, thus, by Observation~\ref{obs:pi-x}, $\angle(raq) \geq \frac{3\pi}{4}$. Applying Claim~\ref{claim:triangle} on triangle $\triangle(raq)$ with $d = \frac{\pi}{2}$ gives us
 		\begin{eqnarray}
 		\label{eq:eqTr2}
 		   (1+\sqrt{2})^{2}|ra| + \frac{\pi}{2}|aq| \leq (1+\sqrt{2})^{2}|rq|.
     \end{eqnarray}
     Let $q'$ be a point on the intersection of disk $D_{s,r,p}$ and the extension of $\{s,q\}$ (see,~Figure~\ref{fig:sf_case22}). Then, by convexity we get 
     \begin{eqnarray}
 		\label{eq:eqTrIn}
 		   |rq| + |pq| \leq |rq'| + |pq'| \leq^{(*)}  \frac{|pr|}{ \cos(\alpha / 2)}.
 		 \end{eqnarray}  
      The last inequality (*) is obtained by Claim~\ref{cl:maxEq}.

      Let $b$ be a point on the extension of $\{s,r \}$, such that $|sb|= |sp|$; therefore, 
      $\angle(sbp) = \angle(spb) = \frac{\pi}{2}-\frac{\alpha}{2}$.\\
      By the law of sines,
      \begin{eqnarray}
      \label{eq:pbsp}
           |pb|  =  |sp|\frac{\sin(\alpha)}{\sin(\frac{\pi}{2}-\frac{\alpha}{2})}
            		 =  |sp|\frac{\sin(\alpha)}{\cos(\frac{\alpha}{2})}
            		 =  2|sp|\sin(\alpha/2).
     \end{eqnarray}
      
      Since  angle $\angle(srp) < \pi/2$, it follows that angle  $\angle(brp) > \pi/2$, thus,  
      \begin{eqnarray}
      \label{eq:pbpr}
         |pb| > |pr|.
      \end{eqnarray}
       
       Now we are ready to bound the length of the path: 
      \begin{eqnarray*}
      \label{eq:fin}
        |sr|+ \delta_G(r,p) 
          &\leq^{\ \ \ } &  
                 |sr|+(1+\sqrt{2})^{2}(|ra|+|pq|)  + \frac{\pi}{2}|qa|   \\
          & \leq^{ (\ref{eq:eqTr2}) } &  
                 |sr|+(1+\sqrt{2})^{2}(|rq|+|pq|)   \\
          & \leq^{ (\ref{eq:eqTrIn}) } &
                |sr|+(1+\sqrt{2})^{2}\frac{|pr|}{\cos(\alpha/2)}     \\
          & \leq^{ (\ref{eq:pbpr}) } &  
                |sr|+(1+\sqrt{2})^{2} \frac{|pb|}{\cos(\alpha/2)}   \\
         & \leq^{ (\ref{eq:pbsp}) } &  
                |sb|+(1+\sqrt{2})^{2}\frac{2|sp|\sin(\alpha/2)}{\cos(\alpha/2)} \\
         & =^{\ \ \ } &  
                 |sp|(1+(1+\sqrt{2})^{2}(2 \tan(\alpha/2) )) \\
         & \leq^{(**)}& 
                 |sp|(1+(1+\sqrt{2})^{2}(2 \tan(\pi/8) ))\\
         & =^{\ \ \ }& 
                 |sp|(1+2(1+\sqrt{2})(1+\sqrt{2})(\sqrt{2}-1))\\
         & =^{\ \ \ }& 
                 |sp|(1+2(1+\sqrt{2}))\\
         & =^{\ \ \ }& 
                 |sp|(1+\sqrt{2})^{2}.             
  \end{eqnarray*}
    The last inequality (**) follows from the fact that \emph{tangent} is a monotone increasing function in the range $(0,\pi/4]$. 
      
 	\begin{figure}[htp]
		    \centering
		        \includegraphics[width=0.25\textwidth]{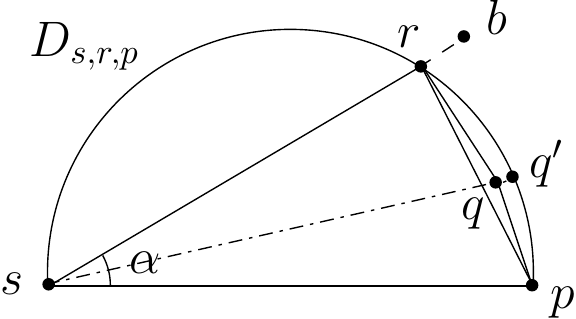}
		    \caption{Illustrating the proof of Lemma~\ref{lemma:lemma5}, case 2, when $\angle(srp)< \frac{\pi}{2}$.}
		    \label{fig:sf_case22}
		\end{figure}
		
		\end{itemize}
	\end{itemize}
	\end{proof}
 		
\vspace{0.2cm}
\begin{claim}\label{cl:maxEq}
Let $a$, $b$, and $c$ be three points on a circle, such that $\angle(abc)= \pi-\alpha$. Then, $|ab| + |bc| \leq \frac{|ac|}{\cos(\frac{\alpha}{2})}$.
\end {claim}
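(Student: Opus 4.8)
The plan is to forget the circle almost entirely and work inside the triangle $\triangle(abc)$, reducing everything to an elementary trigonometric bound. The only role of the hypothesis ``$a,b,c$ on a circle'' is to fix the angle at $b$ to be $\angle(abc)=\pi-\alpha$; any triangle is inscribed in its own circumcircle, so the statement is really a fact about triangles with a prescribed angle at $b$. First I would name the two remaining angles $\beta_1=\angle(bac)$ and $\beta_2=\angle(bca)$, which satisfy $\beta_1+\beta_2=\alpha$ since the angles of $\triangle(abc)$ sum to $\pi$. Applying the law of sines (side $ab$ is opposite $\beta_2$, side $bc$ is opposite $\beta_1$, side $ac$ is opposite the angle at $b$) and using $\sin(\pi-\alpha)=\sin\alpha$ gives
$$|ab|+|bc|=\frac{|ac|}{\sin\alpha}\bigl(\sin\beta_1+\sin\beta_2\bigr).$$

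Second, I would rewrite the target inequality in a form that isolates this sum. Using the double-angle identity $\sin\alpha=2\sin(\alpha/2)\cos(\alpha/2)$, the claimed bound $|ab|+|bc|\le |ac|/\cos(\alpha/2)$ is, after substituting the expression above and clearing the common positive factor $|ac|/\sin\alpha$, equivalent to
$$\sin\beta_1+\sin\beta_2\le 2\sin(\alpha/2).$$
This is the entire content of the claim once the law of sines has been invoked.

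Third, I would prove this last inequality by sum-to-product. Since $\beta_1+\beta_2=\alpha$,
$$\sin\beta_1+\sin\beta_2=2\sin\!\Big(\tfrac{\beta_1+\beta_2}{2}\Big)\cos\!\Big(\tfrac{\beta_1-\beta_2}{2}\Big)=2\sin(\alpha/2)\,\cos\!\Big(\tfrac{\beta_1-\beta_2}{2}\Big)\le 2\sin(\alpha/2),$$
where the final step uses $\cos(\cdot)\le 1$. This closes the argument, with equality precisely when $\beta_1=\beta_2$, i.e.\ when $b$ is the midpoint of its arc and $\triangle(abc)$ is isosceles. I do not anticipate a genuine obstacle here: the proof is a two-line reduction followed by one standard identity, and the main thing to be careful about is the bookkeeping of which side is opposite which angle in the law of sines, together with the clean observation that the circle hypothesis contributes only the value of $\angle(abc)$. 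If one prefers a geometric reading, the same conclusion follows from fixing the chord $ac$ and noting that $|ab|+|bc|$ is maximized over the arc exactly when $b$ is equidistant from $a$ and $c$, yielding the value $|ac|/\cos(\alpha/2)$.
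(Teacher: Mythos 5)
Your proof is correct and takes essentially the same route as the paper: both apply the law of sines to write $|ab|+|bc| = \frac{|ac|}{\sin(\alpha)}\left(\sin(\beta_1)+\sin(\beta_2)\right)$ and then bound the sum of sines by $2\sin(\alpha/2)$. The only difference is that you justify the maximization step explicitly via the sum-to-product identity, whereas the paper simply asserts that the sum is maximized at $\beta_1=\beta_2=\frac{\alpha}{2}$, so your write-up is, if anything, slightly more complete.
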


\begin{proof}
Let $\beta_1$ be the angle between $ba$ and $ca$, and let $\beta_2$ be the angle between $bc$ and $ca$, as depicted in Figure~\ref{fig:maxEq}. By the law of Sines we have $$ \frac{|ac|}{\sin(\pi-\alpha)}=\frac{|bc|}{\sin(\beta_1)}= \frac{|ab|}{\sin(\beta_2)}.$$  
Therefore,
$$|ab| + |bc| =  |ac|\frac{\sin(\beta_2)}{\sin(\pi-\alpha) } + |ac|\frac{\sin(\beta_1)}{\sin(\pi-\alpha)} = 
\frac{|ac|}{\sin(\pi-\alpha))}(\sin(\beta_2) + \sin(\beta_1)).$$ 
For $0 \leq \alpha \leq \pi/2$, this function is maximized when $\beta_1 = \beta_2 = \frac{\alpha}{2}$. Thus,
\begin{eqnarray*}
|ab| + |bc| &\leq& 2|ac|\frac{\sin(\frac{\alpha}{2})}{\sin(\pi-\alpha)} \\
&=& 2|ac|/\frac{\sin(\alpha)}{\sin(\frac{\alpha}{2})}\\
&=& \frac{2|ac|}{2\cos(\frac{\alpha}{2})}\\
&=& \frac{|ac|}{\cos(\frac{\alpha}{2})}.
\end{eqnarray*} 
  
\end{proof}
 		 
\begin{figure}[htp]
    \centering
        \includegraphics[width=0.5\textwidth]{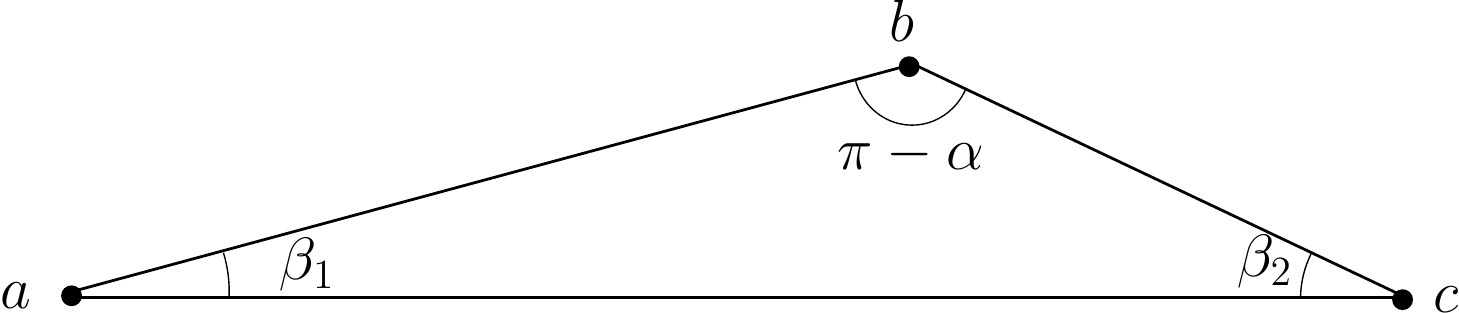}
    \caption{Illustrating the proof of Claim~\ref{cl:maxEq}.}
    \label{fig:maxEq}
\end{figure}

\begin{theorem}\label{theo:main}
For every set of points $P$, there is a strong planar $t$-spanner with 
$t = (1+\sqrt{2})^{2}\cdot\delta$, where $\delta$ is the stretch factor of Delaunay triangulation with bounded degree 7.
\end{theorem}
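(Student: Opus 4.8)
The plan is to assemble Theorem~\ref{theo:main} directly from the properties already established for the graph $G=(P,E)$ returned by Algorithm~\ref{alg:AlgMain}. Three ingredients are required: planarity, the degree bound, and the spanning ratio. Planarity and the linear edge count come for free from the Remark, since $G$ is a subgraph of $\DT(P)$; the maximum degree bound of $7$ is exactly Lemma~\ref{lemma:lemma0}. It therefore remains to upgrade the per-edge guarantee of Lemma~\ref{lemma:lemma5} into a genuine spanning ratio for every pair of points of $P$ measured against their Euclidean distance.

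For the spanning ratio I would argue by composition. Fix any two points $u,v \in P$. Because $\DT(P)$ is a $\delta$-spanner of the complete Euclidean graph on $P$, there is a path $u = w_0, w_1, \dots, w_m = v$ in $\DT(P)$ with $\sum_{\ell=1}^{m} |w_{\ell-1} w_\ell| \le \delta\,|uv|$. Each $\{w_{\ell-1}, w_\ell\}$ is a Delaunay edge, so Lemma~\ref{lemma:lemma5} supplies a path in $G$ between its endpoints of length at most $(1+\sqrt{2})^2\,|w_{\ell-1}w_\ell|$. Concatenating these $G$-paths yields
\begin{equation*}
\delta_G(u,v) \le \sum_{\ell=1}^{m} (1+\sqrt{2})^2 |w_{\ell-1} w_\ell| = (1+\sqrt{2})^2 \sum_{\ell=1}^{m} |w_{\ell-1} w_\ell| \le (1+\sqrt{2})^2 \cdot \delta\, |uv|.
\end{equation*}
Setting $t = (1+\sqrt{2})^2 \cdot \delta$ gives $\delta_G(u,v) \le t\,|uv|$ for every pair, which is precisely the claimed spanning ratio. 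The strong qualifier is inherited from the fact that the bound holds uniformly for all pairs, not merely for Delaunay edges, through the concatenated path.

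Since all of the heavy geometric analysis has already been carried out---the degree accounting in Lemma~\ref{lemma:lemma0} and, most importantly, the case analysis bounding the detour factor by $(1+\sqrt{2})^2$ in Lemma~\ref{lemma:lemma5}---the proof of the theorem itself is essentially bookkeeping. The only point requiring care is the composition step above: one must verify that the spanner property of Lemma~\ref{lemma:lemma5}, stated only for edges of $\DT(P)$, correctly chains along a $\DT(P)$-path and that the two multiplicative factors $(1+\sqrt{2})^2$ and $\delta$ combine cleanly rather than interfere. I expect no genuine obstacle here; the main conceptual hurdle of the paper lives inside Lemma~\ref{lemma:lemma5}, which the theorem simply packages together with planarity and the degree bound.
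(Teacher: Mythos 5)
Your proposal is correct and takes essentially the same route as the paper, which states Theorem~\ref{theo:main} without an explicit proof as the direct assembly of the Remark (the output graph is a subgraph of the Delaunay triangulation, hence planar), Lemma~\ref{lemma:lemma0} (maximum degree $7$), and Lemma~\ref{lemma:lemma5} (stretch factor). Your explicit composition step---chaining the per-edge bound $\delta_G(s,p)\leq(1+\sqrt{2})^{2}|sp|$ established in the proof of Lemma~\ref{lemma:lemma5} along a Delaunay spanning path to obtain the factor $(1+\sqrt{2})^{2}\cdot\delta$ for all pairs of points---is precisely the bookkeeping the paper leaves implicit, and it is valid.
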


\section{Linear-running-time distributed algorithm for planner t-spanner with bounded degree 7 } \label{sec:runTime}
In this section we show a linear-running-time distributed algorithm that, given a Delaunay triangulation of set of points $P$ and clockwise order of the edges around each point $p \in P$, computes a planner $t$-spanner of $P$ with bounded degree 7, where the stretch factor $t$  is as before ($(1+\sqrt{2})^2 \delta$). 

The distributed algorithm chooses the edges according to the same principle as in Algorithm~\ref{alg:AlgMain}, meaning Lemma~\ref{lemma:shortest} still holds, only this time the edges that could be chosen by the algorithm are predefined. 
In order to predefine the candidate edges, we need to observe which kind of edges in a cone could be chosen by the algorithm.  Moreover, we need to compute and store these edges in a sorted order by length in linear time. 
The linear running time is achieved by selecting the edges based on several properties rather than sorting them. For simplicity of presentation we assumed that all edges in a cone are of different length, however even if this is not the case, we can solve it using known method in distribute computing such as \emph{wait and notify}.
Notice that Step~\ref{step:whileSy} in Algorithm~\ref{alg:AlgMainLinDis} terminates, since we are in the distributed setting. Moreover, each time-step there exist at least two points that remove at least one edge from the top of their lists. Since there are at most linear number of edges that can be removed, and each edge removal takes constant time, the total running time of the algorithm is $O(n)$.

\begin{algorithm}[htp]
\caption{BoundSpanner($P$)}\label{alg:AlgMainLin}
\begin{algorithmic}[1]
\REQUIRE $\DT(P)$ - Delaunay triangulation of set of points $P$ 
                   and a clockwise order of the edges around each point
\ENSURE A planar $t$-spanner $G = (P,E')$ with maximum degree 7 
\vspace{0.2cm}
\STATE  Initialize $C_p$ for each $p \in P$     \hspace{3.5cm} /* with respect to edge $(p,q_{min})$ */  \label{init1}

\FOR {every cone $C_{p_j}$ in $C_p$}
	\STATE CandidateEdgesInCone($C_{p_j}$)  \hspace{0.3cm} /* calling a subroutine that returns a set of edges \\
	                                        \hspace{5.8cm}    in a cone that are candidates to be added to the    \\
	                                        \hspace{5.8cm}    set $E'$ in a sorted order by length \  */ \label{init2}
\ENDFOR 
\STATE $E' \leftarrow \emptyset$
\STATE $E^* \leftarrow \emptyset$  \hspace{3.9cm}    /* additional edges to be included in $E'$ in the end */ 
\STATE Let $List(p)$ be the sorted list obtained by merging the lists $List(C_{p_j})$ for all $C_{p_j} \in C_p$ 
\vspace{0.1cm}
\FOR {each $p$ in $P$ \  (* in distributed behavior *) } 
	    \STATE DistributedEdgeSelction(p) 
\ENDFOR	     
\STATE $E' \leftarrow E'\cup E^*$     

\end{algorithmic}
\end{algorithm}

\begin{algorithm}[htp]
\caption{ CandidateEdgesInCone(cone c)}\label{alg:AlgCones}
\begin{algorithmic}[1]
\REQUIRE A cone $c$ in $C_p$ 
\ENSURE A sorted list $List(c)$ by length of edges in the cone $c$ that are candidates to be added to the set $E'$ 
\vspace{0.2cm}

 \STATE Let $\{\{p,p_1\}, \dots ,\{p,p_k\}\}$ be the edges in the cone $c$ in clockwise order
 \STATE Initialize $List_1(c) \leftarrow \{\{s,p_1\}\}$ and $List_2(c) \leftarrow \{\{s,p_k\}\}$
 \STATE $i \leftarrow 2$
 \WHILE {$|\{p,p_i\}| \geq |\{p,p_{i-1}\}|$}
 		\STATE Add $\{p,p_i\}$ to $List_1(c)$
 		\STATE $i \leftarrow i + 1$
 \ENDWHILE
 \STATE $i \leftarrow k-1$.
 \WHILE {$|\{p,p_i\}| \geq |\{p,p_{i+1}\}|$}
 		\STATE Add $\{s,p_i\}$ to $List_2(c)$
 			\STATE $i \leftarrow i - 1$
 \ENDWHILE 
 
 \STATE $List(c) \leftarrow$ merge($List_1(c)$, $List_2(c)$) \hspace{2.7cm}  /* merging the lists by length */
   \STATE Add the shortest edge in $c$ excluded edges in $List(c)$ in the appropriate place in the sorted list $List(c)$ 
                 \hspace{1.5cm}	/* in case there are more than one edge - add them all */
\end{algorithmic}
\end{algorithm}

\begin{algorithm}[htp]
\caption{DistributedEdgeSelction(p)}\label{alg:AlgMainLinDis}
\begin{algorithmic}[1]
\REQUIRE Sorted list $List(p)$ 
\ENSURE The spanner edges in $p$ contributed by $p$ 
\vspace{0.2cm}
	\WHILE {($List(p) \neq \emptyset$)}\label{st:headOfList} 
	   \STATE $\{p,q \} \leftarrow \ TOP(List(p))$ \qquad \qquad /* $TOP(List(p))$ is the first edge in $List(p)$  */   
	   \STATE Remove $\{p,q \}$ from $List(p)$
	   \IF {not $\forall C_{p_i}$ contain $\{p,q  \}$, $C_{p_i}\cap E' = \emptyset$} 
	       \STATE  Go back to~\ref{st:headOfList}
	   \ENDIF
	   \STATE \textbf{while}  ($|TOP(List(q))| < |\{p,q\}|$) \textbf{do} \ (nothing) \ /* distributed behavior ensures \\ 
	                                                     \hspace{8.7cm}                   while-loop termination \ */ \label{step:whileSy} 
	   \IF {($\forall C_{q_j}$ contain $\{p,q  \}$, $C_{q_j}\cap E' = \emptyset$)}   
				\STATE $E' \leftarrow E'\cup \{ \{p,q  \} \}$ 
        \STATE Wedge($p,q$) 
        \STATE Wedge($q,p$) 
		\ENDIF
	\ENDWHILE
\end{algorithmic}
\end{algorithm}

\begin{lemma}\label{lemma:5}
Let $\{s,p_1\},...,\{s,p_k\}$ be all the edges of a cone $c \in C_p$ in clockwise order. 
Let $W^*_{s,p_1,p_i}$ be the maximum wedge such that 
$|s p_{m-1}| \leq |s p_m|$ for every $1 < m \leq i$.
Symmetrically, let $W^*_{s,p_j,p_k}$ be the maximum wedge such that $|s p_{m+1}| \leq |s p_m| $ for every $j \leq m < k$. I.e., wedge $W*_{s,p_1,p_i}$  (alternatively,  $W*_{s,p_j,p_k}$) is the largest increasing sequence of edges clockwise (counterclockwise) starting from $\{s,p_1\}$ ($\{s,p_k\}$, respectively).
Let $E_{short}$ be the set of shortest edges in the wedge $W_{s,p_i,p_j}$, 
i.e., $\forall e_1 \in  E_{short}$ and $\forall e_2 \in W_{s,p_i,p_j}$ it holds that $|e_1| \leq |e_2|$. 
Then, the edge of cone $c$ that has been chosen by Algorithm~\ref{alg:AlgMain} to be added to $E'$, is from the set 
$E_{short} \cup W^*_{s,p_1,p_i} \cup W^*_{s,p_j,p_k}$.
\end{lemma}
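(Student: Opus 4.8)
\textbf{Goal restated.}\textbf{Proof proposal for Lemma~\ref{lemma:5}.}

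\emph{Reduction to the interior wedge.} The first thing I would record is that at most one edge of the cone $c$ is ever added to the spanner: by the agreement rule (Observation~\ref{obs:obs1}), once some edge of $c$ lies in the chosen set, the condition ``$C_{s_i}\cap E'=\emptyset$'' fails for every later edge of $c$, so $s$ blocks all of them. Next I would observe that the three index blocks $\{p_1,\dots,p_i\}$, the strict interior $\{p_{i+1},\dots,p_{j-1}\}$, and $\{p_j,\dots,p_k\}$ together exhaust all edges of $c$. Here one must first check $i\le j$: if the two increasing runs crossed ($j<i$) then the indices $j,\dots,i$ would be forced both non-decreasing and non-increasing, impossible under the distinct-length assumption; so the three blocks do partition the cone (the interior being empty when $j\le i+1$). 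Since the first and third blocks are contained in $W^*_{s,p_1,p_i}$ and $W^*_{s,p_j,p_k}$ by definition, if $\{s,p_m\}$ is the (unique) chosen edge of $c$, the only case that needs an argument is $i<m<j$, and for this case it suffices to prove $\{s,p_m\}\in E_{short}$, i.e. that $\{s,p_m\}$ is a shortest edge of the interior wedge $W_{s,p_i,p_j}$.

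\emph{Two anchoring inequalities.} The engine of the whole argument is Lemma~\ref{lemma:shortest}, applied with the chosen edge $\{s,p_m\}$ as one boundary of a wedge. First I would use maximality of the two increasing runs, which gives the strict inequalities $|sp_{i+1}|<|sp_i|$ and $|sp_{j-1}|<|sp_j|$. From these I would deduce $|sp_m|\le|sp_i|$ and $|sp_m|\le|sp_j|$. For the left inequality: if $m=i+1$ it is immediate, since $|sp_m|=|sp_{i+1}|<|sp_i|$; if $m\ge i+2$ then $\{s,p_{i+1}\}\in W_{s,p_i,p_m}$, so Lemma~\ref{lemma:shortest} gives $|sp_{i+1}|\ge\min\{|sp_m|,|sp_i|\}$, and combined with $|sp_{i+1}|<|sp_i|$ this forces the minimum to be $|sp_m|$, hence $|sp_m|\le|sp_i|$. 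The inequality $|sp_m|\le|sp_j|$ follows by the symmetric argument on the right, using $|sp_{j-1}|<|sp_j|$.

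\emph{Conclusion of minimality.} With both anchors in hand I would take an arbitrary interior edge $\{s,p_l\}$ with $i<l<j$ and $l\ne m$ and show $|sp_l|\ge|sp_m|$. If $l<m$ then $\{s,p_l\}\in W_{s,p_i,p_m}$, so Lemma~\ref{lemma:shortest} yields $|sp_l|\ge\min\{|sp_m|,|sp_i|\}=|sp_m|$ by the first anchor; if $l>m$ then $\{s,p_l\}\in W_{s,p_m,p_j}$ and $|sp_l|\ge\min\{|sp_m|,|sp_j|\}=|sp_m|$ by the second. Hence $\{s,p_m\}$ is a shortest edge of $W_{s,p_i,p_j}$, so $\{s,p_m\}\in E_{short}$, completing the only nontrivial case and therefore the lemma.

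\emph{Where the difficulty lies.} The proof is short once the reduction is set up, and no fresh geometry beyond Lemma~\ref{lemma:shortest} is needed; the delicate points are bookkeeping rather than ideas. The main obstacle is the correct handling of the boundary cases $m=i+1$ and $m=j-1$, where the relevant wedge degenerates to the empty set and Lemma~\ref{lemma:shortest} becomes vacuous, so the anchoring inequality must instead be read off directly from the strict run-maximality inequality. One must also be careful with the orientation convention for $W_{s,\cdot,\cdot}$ (so that the interior edges picked out are exactly $p_{i+1},\dots,p_{m-1}$ or $p_{m+1},\dots,p_{j-1}$) and with always placing the \emph{chosen} edge $\{s,p_m\}$ in the role of the selected boundary edge when invoking Lemma~\ref{lemma:shortest}.
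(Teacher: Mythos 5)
Your proof is correct and takes essentially the same approach as the paper's: both arguments rest on combining Lemma~\ref{lemma:shortest}, applied with the chosen edge as one boundary of a wedge, with the strict inequalities $|sp_{i+1}|<|sp_i|$ and $|sp_{j-1}|<|sp_j|$ furnished by maximality of the two monotone runs. The only difference is organizational: the paper argues by contradiction with a single application of Lemma~\ref{lemma:shortest} (a shortest interior edge $\{s,p_r\}$ would satisfy $|sp_r|<\min\{|sp_t|,|sp_j|\}$ inside the wedge $W_{s,p_t,p_j}$), whereas you argue directly that the chosen interior edge is itself a shortest edge of $W_{s,p_i,p_j}$, which also makes the degenerate cases $m=i+1$ and $m=j-1$ explicit.
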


\begin{proof}
Assume on the contrary that Algorithm~\ref{alg:AlgMainLin} chooses an edge $\{s,p_t\} \in c$ to be added to $E'$ that is not in $E_{short} \cup W^*_{s,p_1,p_i} \cup W^*_{s,p_j,p_k}$. 
W.l.o.g. (due to symmetry) assume $\{s,p_t\} \in W_{s,p_i,p_r}\backslash \{\{s,p_r\}\}$ for $\{s,p_r\} \in E_{short}$. Observe $\{s,p_j\}$, due to the maximality of $W^*_{s,p_j,p_k}$, we have $|s p_r| < |s p_j|$. 
Since $\{s,p_r\}$ is the shortest edge in the wedge $W_{s,p_i,p_j}$ and since $\{s,p_t\} \notin E_{short}$, it implies that $|s p_r| < |s p_t|$. Therefore, $|s p_r| < min\{|s p_t|,|s p_j|\}$ in contradiction to Lemma~\ref{lemma:shortest} (see,~Figure~\ref{fig:linear}).
\end{proof} 

\begin{figure}[htp]
    \centering
        \includegraphics[width=0.3\textwidth]{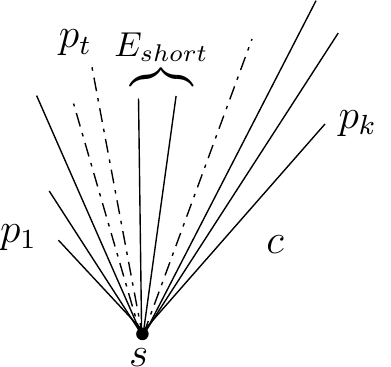}
    \caption{Illustrating the proof of Lemma~\ref{lemma:5}. The edges with no potential to be chosen by Algorithm~\ref{alg:AlgMain} are dashed. }
    \label{fig:linear}
\end{figure}

\begin{lemma}\label{lemma:sameEdges}
The resulting graph $G=(P,E)$ of Algorithm~\ref{alg:AlgMain} and the resulting graph $G'=(P,E')$ of Algorithm~\ref{alg:AlgMainLin} are identical.
\end{lemma}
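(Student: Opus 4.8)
The plan is to prove $E = E'$ in two stages: first reduce the equality to the \emph{main} edges (those committed in the selection loop of Algorithm~\ref{alg:AlgMain}, respectively in the loop of Algorithm~\ref{alg:AlgMainLinDis}), and then show that the asynchronous distributed execution commits exactly the same main edges as the sequential sweep. For the reduction, observe that the auxiliary edges $E^*$ are produced solely by the subroutine Wedge, whose output depends only on the fixed Delaunay structure (clockwise neighbor order and the cones $C_p$, which are oriented by $q_{min}$ identically in both algorithms) and not on the current state of $E$; moreover Wedge is invoked with identical argument pairs, namely \emph{precisely} when a main edge is committed (Wedge$(p,q)$ and Wedge$(q,p)$ right after adding $\{p,q\}$). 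Hence if the two algorithms commit the same main edges, the same Wedge calls are made and the resulting $E^*$ sets coincide, giving $E=E'$. From here I would argue only about the main edges.

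Both algorithms apply the identical selection predicate: commit an edge iff every cone of $C_p$ and every cone of $C_q$ containing it is empty of previously committed main edges. The sequential Algorithm~\ref{alg:AlgMain} realizes this as a greedy sweep over \emph{all} edges of $\DT(P)$ in global nondecreasing length order; call the committed main-edge set $M$. The first key step is to trim this sweep to the candidate edges that Algorithm~\ref{alg:AlgCones} enumerates. By Lemma~\ref{lemma:5}, every edge committed in a cone is a candidate of that cone, so $M$ is contained in the candidate set; contrapositively, a non-candidate edge is never committed, so skipping it in the sweep never alters the running committed set. Therefore the greedy sweep restricted to candidate edges (still in global length order) again produces $M$, and this candidate set is exactly the pool $List(p)$ maintained by the distributed algorithm.

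The main step is to show that the asynchronous execution of Algorithm~\ref{alg:AlgMainLin} resolves each candidate edge against the same committed state the global-order sweep would, and hence also outputs $M$. I would prove this by strong induction on edge length, establishing for each candidate edge $\{p,q\}$ that at the instant it is resolved, every shorter candidate edge sharing a cone with it is already resolved (and present in $E'$ iff in $M$), while no longer cone-mate has yet entered $E'$; granting this, the two one-sided emptiness tests read exactly the greedy state and the verdict matches. Three facts drive the step. First, by Lemma~\ref{lemma:5} any edge of $M$ incident to a vertex is a candidate of the cone it occupies, hence lies in that vertex's list and is examined by the vertex's own process in increasing length order, so a shorter cone-mate at the \emph{examining} endpoint is committed (if in $M$) before $\{p,q\}$ is reached. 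Second, the wait in Step~\ref{step:whileSy} blocks a commit of $\{p,q\}$ until the partner endpoint has dequeued all edges shorter than $\{p,q\}$, so by the same ordering every shorter cone-mate at the \emph{partner} endpoint is resolved before the partner-side test is performed. Third, again by Step~\ref{step:whileSy}, any process attempting to commit a longer cone-mate $\{p,w\}$ must first wait for $p$ to dequeue everything shorter than $\{p,w\}$, in particular $\{p,q\}$, so $\{p,q\}$ is already resolved; this forbids a longer edge from polluting the cone prematurely.

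The part I expect to be the main obstacle is precisely this last point together with the partner-side half of the first: ruling out the race in which a cone-mate is committed at the ``wrong'' time, causing an emptiness test to read a premature or stale state and the distributed decision to diverge from the sequential one. The wait loop of Step~\ref{step:whileSy} is exactly the mechanism that forbids both races, and carrying the induction requires checking that its synchronization guarantee meshes cleanly with each process's sorted dequeuing. Two secondary points must also be dispatched: an edge may be examined by \emph{both} of its endpoints, but since each process evaluates the identical two-sided predicate against the same committed state (by the induction), they reach the same verdict, and as $E'$ is a set a duplicate commit is harmless; and equal-length edges, which I would treat as distinct under a fixed tie-break (the \emph{wait-and-notify} scheme referenced in the text), do not disturb the total order on which the induction rests. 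Closing the induction then yields $E' = M = E$.
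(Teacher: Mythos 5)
Your overall architecture is sound and is in fact considerably more careful than the paper's own proof, which disposes of the lemma in four sentences: it notes that the cone initialization agrees, invokes Lemma~\ref{lemma:5} to say that the candidate lists contain every edge Algorithm~\ref{alg:AlgMain} could choose, and then simply asserts that both algorithms pick ``the shortest edge whose cones from both sides agree,'' never mentioning the Wedge-generated edges $E^*$ nor any scheduling issue. Your first two stages are correct and fill real holes in that argument: the $E^*$ edges are indeed a deterministic function of the committed main edges (Algorithm~\ref{alg:wedge} reads only the fixed Delaunay neighbor order and the cones, never the current $E$), and Lemma~\ref{lemma:5} does let you restrict the sequential sweep to the candidate pool.

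The gap is in your third stage, at exactly the point you flagged as the crux. You claim that because of Step~\ref{step:whileSy} a shorter cone-mate that has been dequeued by its endpoint ``is already resolved.'' But the wait in Step~\ref{step:whileSy} only inspects $TOP(List(\cdot))$, and in Algorithm~\ref{alg:AlgMainLinDis} an edge is removed from the list immediately after being read at Step~\ref{st:headOfList}, \emph{before} its two-sided test is finished: a process can pop $\{p,q\}$, pass the $p$-side test, and then stall at its own wait, during which $\{p,q\}$ is neither in $E'$ nor in $List(p)$. Dequeued does not imply resolved. Concretely, let $\{p,q\}$ (length $5$) and $\{p,w\}$ (length $10$) share a cone of $p$, and let $q$ have a pending shorter edge $\{q,x\}$ (length $3$) lying in a different cone of $q$. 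Schedule: $p$ pops $\{p,q\}$, passes its $p$-side test, and blocks waiting on $q$ (since $3 < 5$); now $TOP(List(p)) = 10$, so $w$, which has popped $\{p,w\}$ and passed its $w$-side test, exits its wait, finds $p$'s cone still empty of committed edges, and commits $\{p,w\}$; afterwards $q$ processes $\{q,x\}$, $p$ un-blocks, passes the $q$-side test, and commits $\{p,q\}$. The distributed run thus outputs both $\{p,q\}$ and $\{p,w\}$, while the sequential sweep commits $\{p,q\}$ and rejects $\{p,w\}$; your invariant that ``no longer cone-mate has yet entered $E'$'' fails, and the induction does not close. The same objection hits your second fact: the partner-side test can likewise read an in-flight, undecided shorter edge as absent. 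To repair the argument you need a synchronization semantics stronger than the pseudocode provides---e.g., an edge remains at the top of its list until it is \emph{decided}, or the pop-test-wait-test-commit sequence is atomic per edge---under which your induction does go through. To be fair, this race is a defect the paper never confronts at all; but since your proof rests on the claim that the wait loop forbids it, and the pseudocode as written does not support that claim, the proof as given is incomplete.
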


\begin{proof}
Note that the initialization of the cones for each point is the same in both algorithms. Moreover, similarly to Algorithm~\ref{alg:AlgMain}, Algorithm~\ref{alg:AlgMainLin} chooses an edge from every cone to be added to $E'$. By Lemma~\ref{lemma:5}, for every cone Algorithm~\ref{alg:AlgMainLin} chooses an edge among all possible edges Algorithm~\ref{alg:AlgMain} could choose for that cone. 
Not only that, but the edges are chosen by the same principle -  the shortest edge, whose cones from both sides "agree" on, is added to $E$. Therefore, $E=E'$ and $G=G'$.
\end{proof}

\begin{theorem}
Algorithm~\ref{alg:AlgMainLin} has the following properties:
\begin{enumerate}
	\item The output graph of the algorithm has a bounded degree 7.
	\item The stretch factor of the output graph is $(1+\sqrt{2})^{2}\delta$, where $\delta$ is the stretch factor of Delaunay triangulation.
	\item The running time of the algorithm is linear.
\end{enumerate}
\end{theorem}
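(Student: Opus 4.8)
The plan is to prove the final theorem by reducing all three claims to results already established for Algorithm~\ref{alg:AlgMain}, using Lemma~\ref{lemma:sameEdges} as the bridge. The key observation is that the hard analytical work has already been done: Lemma~\ref{lemma:lemma0} gives the degree bound, Lemma~\ref{lemma:lemma5} (equivalently Theorem~\ref{theo:main}) gives the stretch factor, and Lemma~\ref{lemma:sameEdges} asserts that Algorithm~\ref{alg:AlgMainLin} and Algorithm~\ref{alg:AlgMain} produce \emph{identical} output graphs. So properties 1 and 2 follow immediately by transport along that equality, and only property 3 (the linear running time) requires genuinely new reasoning specific to the distributed implementation.

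First I would dispatch properties 1 and 2. By Lemma~\ref{lemma:sameEdges} we have $G = G'$, i.e.\ $E = E'$. Since Lemma~\ref{lemma:lemma0} shows that $G$ has maximum degree $7$, the identical graph $G'$ has maximum degree $7$, giving property~1. Likewise, Lemma~\ref{lemma:lemma5} establishes that $G$ is a $t$-spanner with $t = (1+\sqrt{2})^2 \delta$, so the same bound holds verbatim for $G'$, giving property~2. No new estimates are needed here; the entire content is the citation of Lemma~\ref{lemma:sameEdges} together with the earlier two lemmas.

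The substantive part is property~3, the $O(n)$ running time, and this is where I expect the main obstacle to lie. The plan is to account for the cost in three phases. The initialization of the cones (Step~\ref{init1}) and the invocation of \emph{CandidateEdgesInCone} over all cones (Step~\ref{init2}) must be shown to take linear time in aggregate: since $\DT(P)$ has $O(n)$ edges and each edge lies in a bounded number of cones, the two monotone scans inside Algorithm~\ref{alg:AlgCones} together touch each edge a constant number of times, so summing over all cones and all points gives $O(n)$; merging the per-cone lists into $List(p)$ is likewise linear because the lists being merged are already sorted and their total size is $O(n)$. The delicate phase is the distributed selection loop (Algorithm~\ref{alg:AlgMainLinDis}): here I would argue, as the paper sketches in the paragraph preceding the algorithms, that each time-step at least two points each remove at least one edge from the top of their list, that the total number of edges across all lists is $O(n)$, and that each removal together with the agreement check and the \emph{Wedge} calls costs only $O(1)$ amortized time (using the charging argument behind Lemma~\ref{lemma:lemma0} to bound the total number of \emph{Wedge}-added edges by $O(n)$). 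The termination of the \emph{wait}-loop at Step~\ref{step:whileSy} is guaranteed by the distributed semantics rather than by a worst-case bound, so the $O(n)$ claim counts edge-removal work rather than idle waiting.

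The main obstacle, then, is making the running-time argument for Algorithm~\ref{alg:AlgMainLinDis} fully rigorous: one must verify that the candidate lists produced by \emph{CandidateEdgesInCone} are short enough in total (Lemma~\ref{lemma:5} restricts candidates to $E_{short} \cup W^*_{s,p_1,p_i} \cup W^*_{s,p_j,p_k}$, so their combined size over all cones stays linear), and that the synchronization at Step~\ref{step:whileSy} never forces super-linear total work. I would close the argument by combining the three linear phase-bounds into an overall $O(n)$ running time, which completes property~3 and hence the theorem.

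\begin{proof}
By Lemma~\ref{lemma:sameEdges}, Algorithm~\ref{alg:AlgMainLin} and Algorithm~\ref{alg:AlgMain} produce identical output graphs, $G = G'$. Property~1 is therefore immediate from Lemma~\ref{lemma:lemma0}, and property~2 from Lemma~\ref{lemma:lemma5}. For property~3, the initialization of the cones and the calls to \emph{CandidateEdgesInCone} over all cones touch each of the $O(n)$ edges of $\DT(P)$ a constant number of times, and by Lemma~\ref{lemma:5} the resulting candidate lists have total size $O(n)$; merging them into the sorted lists $List(p)$ is linear. In the distributed selection phase, each time-step at least two points remove at least one edge from the top of their lists, each such removal (with its agreement check and the constant number of edges added by \emph{Wedge}) costs $O(1)$, and there are only $O(n)$ edges to remove in total, so this phase also runs in $O(n)$ time. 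Summing the three phases gives an overall running time of $O(n)$, which completes the proof.
\end{proof}
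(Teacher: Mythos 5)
Your proposal is correct and follows essentially the same route as the paper: properties 1 and 2 are transported from Algorithm~\ref{alg:AlgMain} via Lemma~\ref{lemma:sameEdges} (the paper cites this lemma explicitly for the stretch factor and implicitly for the degree bound), and property 3 is argued phase-by-phase exactly as in the paper's proof and its preceding remark (linear initialization and merging, then the distributed removal argument with constant work per edge removal). The only cosmetic difference is that you cite Lemma~\ref{lemma:5} to bound the total candidate-list size, which is unnecessary since the lists are subsets of the $O(n)$ Delaunay edges and each edge appears in a constant number of lists anyway.
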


\begin{proof}

\begin{enumerate}	
	\item The degree bound follows exactly the degree bound of the previous algorithm (Algorithm~\ref{alg:AlgMain}) and 
	thus is at most 7.
	\item By Lemma~\ref{lemma:sameEdges}, the output graph of Algorithms~\ref{alg:AlgMain} and~\ref{alg:AlgMainLin} are identical. Therefore, the stretch factor of Algorithm~\ref{alg:AlgMainLin} is the same as the one of Algorithms~\ref{alg:AlgMain}, which is $(1+\sqrt{2})^{2}\delta$.
	
	\item Steps~\ref{init1} to~\ref{init2} in Algorithm~\ref{alg:AlgMainLin} require traversal over all points and edges of $\DT(P)$. Since $\DT(P)$ contains $O(n)$ edges we get $O(n)$ running time. 
				Merging two sorted lists and inserting edges of the same length cost $O(n)$ time. 
				The distributed subroutine requires every point to remove all edges incident to it from their list 
				and wait, in the worst case, until all edges have been removed from their lists. Since every edge is incident to two points, the overall running time of this subroutine for each point is as the number of edges which is $O(n)$. 
				Since all points perform this subroutine in a distributed manner, we get $O(n)$ running time.
\end{enumerate}
\end{proof}

\bibliographystyle{abbrv}

\setlength{\baselineskip}{0.97 \baselineskip} 
\bibliography{ref}  

\end{document}